\newcommand{\pagenumbaa}{1}
\theoremstyle{plain}
\newtheorem{mythm}{Theorem}[section]
\newtheorem{myprop}[mythm]{Proposition}
\newtheorem{mycor}[mythm]{Corollary}
\newtheorem{mylem}[mythm]{Lemma}
\theoremstyle{definition}
\newtheorem{mydef}[mythm]{Definition}
\newcommand{\herm}{\ensuremath{^{\dagger}}}
\newcommand{\ket}[1]{\left| #1 \right \rangle}
\newcommand{\ketbra}[2]{\left| #1 \right \rangle\!\left \langle #2 \right|}
\newcommand{\norm}[1]{\left\lVert#1\right\rVert}
\newcommand{\braket}[2]{\left| #1 \right \rangle \! \left \langle #1 \right|}
\newcommand{\Hp}{\mathrm{H}_{+}^{n}} 
\newcommand{\Hop}{\mathrm{H}^{n}}
\def\Hb{h}
\def\setC{\mathsf{c}}
\newcommand{\introsection}[1]{{\bf{#1}.}}
\newcommand{\Hh}[1]{H\!\left({#1}\right)} 
\newcommand{\Hc}[2]{H\!\left({#1}\!\left|{#2}\right.\right)} 
\newcommand{\I}[2]{I\!\left({#1}:{#2}\right)} 
\newcommand{\Icc}[2]{I_{{\mathsf{c}}}\!\left({#1},{#2}\right)} 
\newcommand{\Ic}[3]{I\!\left({#1}:{#2}\!\left|{#3} \right. \right)} 
\newcommand{\1}{{\openone}}
\def\trnorm{\mathrm{tr}}
\def\opnorm{\mathrm{op}}
\newcommand{\cII}{\mathcal{I}} 
\newcommand{\Tr}[1]{{\rm tr}\!\left[{#1}\right]} 
\newcommand{\Trp}[2]{{\rm tr}_{{#1}}\!\left({#2}\right)} 
\DeclareMathOperator{\st}{s.t.}
\DeclareMathOperator{\id}{id}
\newcommand{\transp}{\ensuremath{^{\scriptscriptstyle{\top}}}}
\newcommand{\R}{\ensuremath{\mathbb{R}}}
\newcommand{\Rp}{\ensuremath{\R_{\geq 0}}}
\newcommand{\brackett}[1]{ \left \lbrace \text{#1}\right \rbrace }
\begin{document}

\title{Approximate Degradable Quantum Channels}

 \author{David Sutter}
 \email[]{$\brackett{suttedav,\,scholz,\,renner}$@phys.ethz.ch}
 \affiliation{Institute for Theoretical Physics, ETH Zurich, Switzerland}
 
 \author{Volkher B.~Scholz}
 \email[]{$\brackett{suttedav,\,scholz,\,renner}$@phys.ethz.ch}
 \affiliation{Institute for Theoretical Physics, ETH Zurich, Switzerland}
 
  \author{Andreas Winter}
 \email[]{andreas.winter@uab.cat}
 \affiliation{ICREA \& F\'isica Te\`orica: Informaci\'o i Fen\`omens Qu\`antics,\\
 Universitat Aut\`onoma de Barcelona, ES-08193 Bellaterra (Barcelona), Spain}
 
  \author{Renato Renner}
 \email[]{$\brackett{suttedav,\,scholz,\,renner}$@phys.ethz.ch}
 \affiliation{Institute for Theoretical Physics, ETH Zurich, Switzerland}


\begin{abstract}
Degradable quantum channels are an important class of completely positive trace-preserving maps. Among other properties, they offer a single-letter formula  for the quantum and the private classical capacity and are characterized by the fact that a complementary channel can be obtained from the channel by applying a degrading channel. In this work we introduce the concept of approximate degradable channels, which satisfy this condition up to some finite $\varepsilon\geq0$. That is, there exists a degrading channel which upon composition with the channel is $\varepsilon$-close in the diamond norm to the complementary channel. We show that for any fixed channel the smallest such $\varepsilon$ can be efficiently determined via a semidefinite program. Moreover, these approximate degradable channels also approximately inherit all other properties of degradable channels. As an application, we derive improved upper bounds to the quantum and private classical capacity for certain channels of interest in quantum communication.
\end{abstract}

 \maketitle

 \setcounter{page}{\pagenumbaa}  
 \thispagestyle{plain}
\section{Introduction} \label{sec:intro}
The highest rate at which quantum information can be transmitted asymptotically reliably per channel use is called \emph{quantum capacity}. The \emph{private classical capacity} of a quantum channel characterizes the highest possible rate at which classical information can be transmitted asymptotically reliably per channel use such that no information about the message leaks to the environment.  
Both of these quantities are mathematically characterized by a multi-letter expression, using regularization, that is complicated to evaluate --- as a matter of fact, it is not even known to be computable~\cite{cubitt15,elkouss15}. In general, it is even difficult to derive good upper and lower bounds that can be evaluated efficiently for the two capacities.

For \emph{degradable channels}, which are characterized by the feature that the complementary channel can be written as a composition of the main channel with a degrading channel, the channel's coherent and private classical information are additive and coincide. As a result, the regularized expressions describing the quantum and private classical capacity reduce to the same single-letter formula for degradable channels \cite{shor2short,smith08_3}. This simplifies the task of computing the capacity enormously and it happens that for some degradable channels the two capacities can be computed analytically.

Degradable channels form an important class of channels for which, thanks to the induced additivity properties, there is a good understanding of their quantum and private classical capacity. At the same time, the notion of a degradable channel seems to be fragile as a tiny perturbation of a degradable channel may not be degradable anymore. Furthermore, it is unknown whether a channel for which the degradability condition is approximately satisfied (up to some $\varepsilon \geq 0$ with respect to the diamond norm) is close to a degradable channel or not.
Here, we introduce a robust generalization of the concept of a degradable channel. We call a channel \emph{$\varepsilon$-degradable} if the degradability condition with respect to the diamond norm is satisfied up to some $\varepsilon \geq 0$. (The precise definition is given in Definition~\ref{def:epsiDeg}.)
We show that these $\varepsilon$-degradable channels approximately inherit all the desirable properties that degradable channels have, such as additivity of the channel's coherent and channel private information. We further show that for an arbitrary channel, the smallest $\varepsilon \geq 0$ such that the channel is $\varepsilon$-degradable can be efficiently computed via a semidefinite program. This offers a universal method to compute efficiently upper bounds to the quantum and private classical capacity. This will be demonstrated by concrete examples, including the depolarizing channel.

\vspace{2mm}
\introsection{Structure}
The remainder of this article is structured as follows. Section~\ref{sec:preliminaries} introduces a few preliminary results and gives an overview of what is known for degradable channels. Section~\ref{sec:main} presents our main contribution which is a definition of approximate degradable channels that approximately inherits all the desirable properties degradable channels have. We show that for an arbitrary channel the smallest possible $\varepsilon$ such that the channel is $\varepsilon$-degradable can be computed efficiently via a semidefinite program. Section~\ref{sec:convexUB} shows how the concept of approximate degradable channels can be used to derive powerful upper bounds to the quantum and private classical capacity. In Section~\ref{sec:examples} we discuss some examples and show that the upper bounds based on approximate degradable channels can be very tight. 
In Appendix~\ref{app:alternativeDef} we discuss an alternative definition of approximate degradability, based on closeness to a degradable channel, and compare this to our definition, based on the degradability condition being approximately satisfied. We will argue that our choice leads to a more natural and ultimately more useful notion.

\section{Preliminaries} \label{sec:preliminaries}
\introsection{Notation}
The logarithm with base 2 is denoted by $\log(\cdot)$ and the natural logarithm by $\ln(\cdot)$. For $k\in \mathbb{N}$, let $[k]:=\left \lbrace 1,\ldots,k \right \rbrace$. The space of Hermitian operators on a finite-dimensional Hilbert space $\mathcal{H}$ is denoted by $\Hop$, where $|\mathcal{H}|$ is the dimension of $\mathcal{H}$. The cone of positive semidefinite Hermitian operators of dimension $n$ is denoted by $\Hp$.
The space of trace class operators acting on some Hilbert space $\mathcal{H}$ is denoted by $\mathcal{S}(\mathcal{H})$. A quantum channel from a system $A$ to a system $B$ is represented by a completely positive trace-preserving (cptp) linear map $\Phi: \mathcal{S}(A) \to \mathcal{S}(B)$.
 For $\rho \in \mathcal{D}(\mathcal{H})$, where $\mathcal{D}(\mathcal{H}):= \{ \rho\in \Hp \, : \,  \Tr{\rho}=1 \}$ denotes the space of density operators on $\mathcal{H}$, the von Neumann entropy is defined as $H(\rho):=-\Tr{\rho \log \rho}$. For a bipartite state $\rho_{AB} \in \mathcal{D}(A \otimes B)$ we define the conditional entropy as $H(A|B)_{\rho}:=H(AB)_{\rho} - H(B)_{\rho}$ and the mutual information as $I(A:B)_{\rho}:=H(A)_{\rho}+H(B)_{\rho} - H(AB)_{\rho}$.
 For a state $\phi_{AA'} \in \mathcal{D}(A \otimes A')$ and a channel $\Phi: \mathcal{S}(A') \to \mathcal{S}(B)$ let $\rho_{AB}:=\Phi(\phi_{AA'})$. Then an equivalent characterization of the coherent information defined by $\Icc{\rho}{\Phi}:=H (\Phi(\rho))- H(\Phi^{\setC}(\rho))$, where $\Phi^{\setC}$ is a complementary channel (that is defined just below) is $I(A\rangle B)_{\rho}:=H(B)_{\rho}-H(AB)_{\rho}$. 
 For a matrix $A,B \in \mathbb{C}^{m\times n}$, we denote the Frobenius inner product by $\left \langle A,B \right \rangle_F := \Tr{A \herm B}$ and the induced Frobenius norm by $\norm{A}_F:=\sqrt{\left \langle A,A \right \rangle_F}$. The trace norm is defined as $\norm{A}_{\trnorm} :=\mathrm{tr}[\sqrt{A^{\dagger}A}]$. The operator norm is denoted by $\norm{A}_{\opnorm}:=\sup_X \{\norm{AX}_F: \, \norm{X}_F = 1\} $. 
 For a linear map $\Phi:\mathcal{S}(A)\to \mathcal{S}(B)$ its diamond norm is defined by $\norm{\Phi}_{\diamond}:=\norm{\Phi \otimes \cII_{A}}_{\trnorm}$, where $\norm{\cdot}_{\trnorm}$ denotes the trace norm for resources which is defined as $\norm{\Phi}_{\trnorm}:=\max_{\rho \in \mathcal{D}(A)}  \norm{\Phi(\rho)}_{\trnorm}$ and $\cII_A$ denotes the identity map on $A$. We denote the standard $n-$simplex by $\Delta_{n}:=\left\{  x\in\R^{n} : x\geq 0, \sum_{i=1}^{n} x_{i}=1\right\}$. The binary entropy function is defined as $\Hb(\alpha):=-\alpha \log(\alpha)-{(1-\alpha)\log(1-\alpha)}$, for $\alpha\in [0,1]$. 
 \vspace{2mm}
 
 \introsection{Quantum channels}
 A completely positive trace-preserving map $\Phi: \mathcal{S}(A) \to \mathcal{S}(B)$ can be represented in different ways. In this article we will use three different representations that are known as \emph{Stinespring}, \emph{Kraus operator}, and \emph{Choi-Jamio{\l}kowski} representation. 
 Stinespring's representation theorem \cite{stinespring55} ensures that every quantum channel $\Phi:\mathcal{S}(A) \to \mathcal{S}(B)$ can be written in terms of an isometry $V$ from $A$ to the joint system $B \otimes E$ followed by a partial trace such that $\Phi(\rho) = \Trp{E}{V \rho \,V^{\dagger}}$ for all $\rho \in \mathcal{S}(A)$. Tracing out system $B$ instead of $E$ defines a \emph{complementary channel} $\Phi^{\setC}(\rho) = \Trp{B}{V \rho \,V^{\dagger}}$ for all $\rho \in \mathcal{S}(\mathcal{H}_A)$. 
Let $|A|:=\dim A$ and $|B|:=\dim B$ denote the input and output dimension of the quantum channel, respectively, and suppose the environment has dimension $|E|:=\dim E$. 
The Kraus representation theorem ensures that for every cptp map $\Phi:\mathcal{S}(A) \to \mathcal{S}(B)$ there exists a family  $\{ F_x \}_x$ of operators $F_x$ that map from $A$ to the $B$ system, such that 
\begin{equation*}
\Phi: \mathcal{S}(\mathcal{H}_A) \ni \rho \mapsto \Phi(\rho) = \sum_{x} F_x \rho F_x^\dagger \in \mathcal{S}(\mathcal{H}_B)
\end{equation*}
and
\begin{equation*}
 \sum_x F_x^\dagger F_x = \id_A \ .
\end{equation*}
The Choi-Jamio{\l}kowski representation of the channel $\Phi:\mathcal{S}(A) \to \mathcal{S}(B)$ is the operator $J(\Phi) \in \mathcal{S}(B \otimes A)$ that is defined as
\begin{align*}
J(\Phi):=|A| \left(\Phi \otimes \cII_A \right) \left( \ketbra{\Omega}{\Omega}  \right) = \sum_{1\leq i,j \leq |A|} \Phi(E_{ij})\otimes E_{ij} \, ,
\end{align*}
where $\ket{\Omega} = \frac{1}{\sqrt{d}} \sum_{j=1}^d \ket{jj}$ denotes a maximally entangled state and $E_{ij}$ is a $(|A| \times |A|)$ matrix with a one entry at position $(i,j)$ and zeros everywhere else. It is well known that the mapping $\Phi$ is completely positive if and only if $J(\Phi) \geq 0$ and that $\Phi$ is trace-preserving if and only if $\Trp{B}{J(\Phi)}=\mathds{1}_A$. 
Using the Choi-Jamio{\l}kowski representation, the action of the channel $\Phi$ can be written as
\begin{equation*}
\Phi(\rho) = \Trp{A}{J(\Phi) \left(\id_B \otimes \rho\transp \right)}\, ,
\end{equation*}
for $\rho \in \mathcal{S}(A)$ where the transpose is with respect to the basis chosen for the maximally entangled state $\ket{\Omega}$. 

An important class of cptp maps with beneficial properties are the so-called degradable and anti-degradable channels that were introduced in \cite{shor2short}.
\begin{mydef}[Degradable and anti-degradable channels] \label{def:degraded}
A channel $\Phi:\mathcal{S}(A)\to \mathcal{S}(B)$ is called \emph{degradable} if there exists a cptp map $\Xi:\mathcal{S}(B)\to \mathcal{S}(E)$ such that $\Phi^{\setC} = \Xi \circ \Phi$. The channel $\Phi$ is called \emph{anti-degradable} if there exists a cptp map $\Theta:\mathcal{S}(E)\to \mathcal{S}(B)$ such that $\Phi = \Theta \circ \Phi^{\setC}$.
\end{mydef}

Note that $\Phi$ is anti-degradable if and only if $\Phi^\setC$ is degradable. Furthermore, the set of anti-degradable channels is convex and contains the set of entanglement-breaking channels (which is a set with positive volume under the set of all channels)~\cite{cubitt08}.

 \vspace{2mm}
 
 \introsection{Quantum and private classical capacity}
The highest rate at which quantum information can be transmitted asymptotically reliably per channel use is called quantum capacity and is mathematically characterized by the celebrated LSD formula \cite{lloyd_capacity_1997,shor_quantum_2002,devetak_private_2005} (see also~\cite{schumi96,nielsen96,schumi98,barnum00}) 
\begin{equation} \label{eq:Q}
Q(\Phi) = \lim \limits_{k \to \infty} \frac{1}{k} Q^{(1)}(\Phi^{\otimes k}) \, ,
\end{equation}
with
\begin{align}
Q^{(1)}(\Phi) &:= \max \limits_{\rho \in \mathcal{D}(A)} \Icc{\rho}{\Phi} \\ 
&:= \max \limits_{\rho \in \mathcal{D}(A)} H \bigl(\Phi(\rho)\bigr)- H\bigl( \Phi^{\setC}(\rho)\bigr) \, , \label{eq:channelCoherentInfo}
\end{align}
where $\Icc{\rho}{\Phi}$ denotes the coherent information, $Q^{(1)}(\Phi)$ is called channel coherent information, and $\Phi^{\setC}$ is a complementary channel to $\Phi$. 

Due to the regularization in \eqref{eq:Q}, the quantum capacity is difficult to compute. As a consequence, it is of interest to derive good lower and upper bounds to $Q(\Phi)$. It is immediate to verify that $Q^{(1)}(\Phi) \leq Q(\Phi)$ is valid for every channel $\Phi$, i.e., the channel coherent information is always a lower bound for the quantum capacity. However in general, this lower bound is  not tight, i.e., there exist channels $\Phi$ such that $Q^{(1)}(\Phi)<Q(\Phi)$ \cite{shorshort,smithshort}. To derive generic upper bounds for the quantum capacity that can be computed efficiently turns out to be difficult. Beside a few channel specific techniques \cite{wolf07,smith08,smith08_2} that will be discussed in Section~\ref{sec:convexUB}, generic upper bounds have been introduced based on a no-cloning argument \cite{bennett_mixed-state_1996,bruss98,cerf00} or semidefinite programming bounds \cite{rains99,rains01}. However, none of these generic upper bounds is expected to be particularly tight as explained in \cite{smith08_2}. It is thus fair to say that the quantum capacity is still poorly understood in general --- even for very low-dimensional channels. For degradable channels it has been shown that the channel coherent information is additive, i.e., that $Q(\Phi) = Q^{(1)}(\Phi)$ \cite{shor2short}. In general for a given channel $\Phi$, the function $\rho \mapsto \Icc{\rho}{\Phi}$ is not concave which complicates the task of computing $Q^{(1)}(\Phi)$ defined in \eqref{eq:channelCoherentInfo}. However, $\Phi$ being degradable implies that $\rho \mapsto \Icc{\rho}{\Phi}$ is concave \cite[Lem.~5]{yard08} and as such $Q^{(1)}(\Phi)$, and hence $Q(\Phi)$, is characterized via a finite-dimensional convex optimization problem. We also note that due to a no-cloning argument, anti-degradable channels must have a zero quantum capacity, i.e., $Q(\Phi)=0$ \cite{bennett_mixed-state_1996,holevo_book}.

The private classical capacity of a quantum channel characterizes the highest possible rate at which classical information  can be transmitted asymptotically reliably per channel use such that no information about the message leaks to the environment.
It is mathematically characterized by the regularized private channel information  \cite{devetak_private_2005,cai04}, i.e.,
\begin{equation} \label{eq:P}
P(\Phi)=\lim \limits_{k\to \infty} \frac{1}{k} P^{(1)}(\Phi^{\otimes k})\, ,
\end{equation}
with channel private information
\begin{align*}
P^{(1)}(\Phi):= \max_{\{\rho_i,p_i \}} \Big \{ \Hh{\sum_i p_i \Phi(\rho_i)}- \sum_i p_i H\bigl(\Phi(\rho_i)\bigr)-\Hh{\sum_i p_i \Phi^{\setC}(\rho_i)}+ \sum_i p_i H\bigl(\Phi^{\setC}(\rho_i)\bigr) \Big\}\ .
\end{align*}
Similar to the quantum capacity, the regularization arising in \eqref{eq:P} complicates the task of evaluating the private classical capacity and the channel private information $P^{(1)}(\Phi)$ is always a lower bound to $P(\Phi)$ which however in general is not tight \cite{renes08}. Finding generic upper bounds to the private classical capacity again turns out to be difficult. For degradable channels it has been shown that $P^{(1)}(\Phi)=P(\Phi)$, whereas for anti-degradable channels $P(\Phi)=0$ holds \cite{smith08_3}.

There is a close connection between the quantum capacity and the private classical capacity of a quantum channel. Since fully quantum communication is necessarily private, $Q(\Phi) \leq P(\Phi)$ for every channel $\Phi$. It can further be shown that $Q^{(1)}(\Phi) \leq P^{(1)}(\Phi)$ for all channels $\Phi$~\cite{cai04,devetak_private_2005} (see also~\cite[Thm.~12.6.3]{wilde_book}). For degradable channels we have $Q(\Phi) = Q^{(1)}(\Phi) = P^{(1)}(\Phi)=P(\Phi)$ \cite{shor2short,smith08_3}, while for anti-degradable channels we have seen that $Q(\Phi)=P(\Phi)=0$ \cite{holevo_book,smith08_3}.

In the following we will oftentimes use two continuity results for the von Neumann entropy. The first one is known as the \emph{Fannes-Audenaert inequality}.
\begin{mylem}[\cite{audenaert07}] \label{lem:fannesAudenaert}
For any states $\rho \in \mathcal{D}(A)$ and $\sigma \in \mathcal{D}(A)$ such that $\frac{1}{2}\norm{\rho - \sigma}_{\trnorm} \leq \varepsilon \leq 1$,
\begin{align*}
|H(\rho) - H(\sigma)| \leq \varepsilon \log(|A|-1) + \Hb(\varepsilon) \, .
\end{align*}
\end{mylem}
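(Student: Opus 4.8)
\emph{Proof proposal.} The plan is to reduce the statement to an inequality between two probability distributions (equivalently, commuting states) and then prove that classical inequality by a short entropy-coupling argument; the genuinely quantum content is confined to the reduction. First I would pass to spectra: since the von Neumann entropy depends only on the eigenvalues, $H(\rho)=H(p)$ and $H(\sigma)=H(q)$, where $p:=\lambda^{\downarrow}(\rho)$ and $q:=\lambda^{\downarrow}(\sigma)$ are the eigenvalue vectors in non-increasing order, regarded as elements of $\Delta_{|A|}$. By Mirsky's eigenvalue-perturbation inequality (the trace-norm instance of the Hoffman--Wielandt theorem for Hermitian operators) one has $\norm{p-q}_{1}\leq\norm{\rho-\sigma}_{\trnorm}\leq\varepsilon$. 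It therefore suffices to prove, for arbitrary $p,q\in\Delta_{d}$ with $d:=|A|$, the classical bound
\begin{equation*}
|H(p)-H(q)|\;\leq\;\tfrac{1}{2}\norm{p-q}_{1}\log(d-1)+\Hb\!\left(\tfrac{1}{2}\norm{p-q}_{1}\right),
\end{equation*}
and then to use the monotonicity of $t\mapsto t\log(d-1)+\Hb(t)$ on $[0,1-1/d]$ to replace $\tfrac12\norm{p-q}_{1}$ by the (possibly larger) quantity $\varepsilon/2$; the regime $\varepsilon>2(1-1/d)$ is degenerate and is treated separately via the trivial bound $|H(\rho)-H(\sigma)|\leq\log d$.

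For the classical bound I would use the maximal coupling. Write $T:=\tfrac12\norm{p-q}_{1}$ for the total variation distance and let $(X,Y)$ be a coupling of $p$ and $q$ with $\Prob{X\neq Y}=T$, and set $Z:=\indic{X\neq Y}\sim\Bernoulli{T}$, so $H(Z)=\Hb(T)$. Two observations drive the estimate. Conditioned on $\{Z=0\}$ we have $X=Y$, hence $H(X\mid Z=0)=H(Y\mid Z=0)$. Conditioned on $\{Z=1\}$ together with the value of $Y$, the variable $X$ ranges over at most $d-1$ symbols, so $H(X\mid Y,Z=1)\leq\log(d-1)$ and therefore $H(X\mid Z=1)\leq H(X,Y\mid Z=1)=H(Y\mid Z=1)+H(X\mid Y,Z=1)\leq H(Y\mid Z=1)+\log(d-1)$. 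Combining $H(X)\leq H(X,Z)=H(Z)+H(X\mid Z)$ with $H(Y)\geq H(Y\mid Z)$, the chain rule, and these two observations gives $H(X)-H(Y)\leq\Hb(T)+T\log(d-1)$; exchanging the roles of $X$ and $Y$ yields the matching bound for $H(Y)-H(X)$, which is the desired classical inequality.

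The one genuinely delicate point is obtaining the sharp constant $\log(d-1)$ rather than the naive $\log d$: this is exactly the second observation above, namely that conditioning on the disagreement event \emph{and} on one of the coupled variables confines the other to $d-1$ alphabet letters. The remaining ingredients are routine: Mirsky's theorem is standard, and the monotonicity argument used to pass from $\tfrac12\norm{p-q}_{1}$ to $\varepsilon/2$ is elementary. (An alternative, following Audenaert's original approach, is to fix the spectra and directly solve the resulting optimization over the polytope of eigenvalue vectors constrained by the trace distance, identifying the extreme point that attains the maximum of $|H(p)-H(q)|$; there the main work is the extremality analysis rather than the coupling.)
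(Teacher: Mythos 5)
The paper itself does not prove this lemma --- it is imported verbatim from \cite{audenaert07} --- so there is no internal proof to compare against; your argument has to be judged against the cited source. Your route (reduce to the classical case via sorted spectra and Mirsky's inequality $\norm{\lambda^{\downarrow}(\rho)-\lambda^{\downarrow}(\sigma)}_{1}\leq\norm{\rho-\sigma}_{\trnorm}$, then prove the classical bound by a maximal coupling) is the standard modern proof of this inequality; a coupling proof of this kind also appears in \cite{winter15}, the source of the companion Lemma~\ref{lem:alickiFannes}. The coupling step is carried out correctly: $H(X)\leq \Hb(T)+H(X|Z)$ and $H(Y)\geq H(Y|Z)$, the exact cancellation on $\{Z=0\}$, and the bound $H(X|Y,Z=1)\leq\log(d-1)$ on the disagreement event together give $|H(p)-H(q)|\leq T\log(d-1)+\Hb(T)$ with $T$ the total variation distance and $d=|A|$. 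Audenaert's own derivation replaces the coupling by an extremality analysis over the spectral polytope, as you note; both yield the same sharp classical bound.

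The one genuine gap is the final monotonicity step, and it cannot be repaired as written. The function $f(t):=t\log(d-1)+\Hb(t)$ increases only up to $t=1-1/d$, where it attains its maximum $f(1-1/d)=\log d$, after which it decreases to $f(1)=\log(d-1)$. Hence for $\varepsilon/2>1-1/d$ your ``trivial bound'' $|H(\rho)-H(\sigma)|\leq\log d$ exceeds the claimed right-hand side $f(\varepsilon/2)$ and does not close that case. In fact the statement as quoted (with $\norm{\rho-\sigma}_{\trnorm}\leq\varepsilon$ rather than equality) fails in that regime: taking $\rho=\mathds{1}/d$ and $\sigma$ pure gives $\norm{\rho-\sigma}_{\trnorm}=2-2/d\leq\varepsilon$ while $|H(\rho)-H(\sigma)|=\log d>f(\varepsilon/2)$. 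The sharp form of Audenaert's theorem has $\varepsilon/2$ equal to the trace distance; the ``$\leq\varepsilon$'' version is valid only for $\varepsilon\leq 2(1-1/d)$, or equivalently with the right-hand side capped at $\log d$. Since the paper only ever invokes the lemma for small $\varepsilon$, this caveat is harmless downstream, but your proof should either restrict to $\varepsilon\leq 2(1-1/d)$ or state the bound with the exact trace distance rather than claim the degenerate regime follows from $|H(\rho)-H(\sigma)|\leq\log d$.
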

The second statement we will oftentimes use is a continuity result of the conditional von Neumann entropy which is a recent strengthening of the \emph{Alicki-Fannes inequality}~\cite{alicki04} by one of the authors (AW), partly motivated by a first version of this contribution. 
\begin{mylem}[{\cite[Lemma~2]{winter15}}] \label{lem:alickiFannes}
For any states $\rho_{AB} \in \mathcal{D}(A \otimes B)$ and $\sigma_{AB} \in\mathcal{D}(A \otimes B)$ such that $\frac{1}{2}\norm{\rho_{AB} - \sigma_{AB}}_{\trnorm} \leq \varepsilon \leq 1$, 
\begin{equation*}
| H(A|B)_{\rho} - H(A|B)_{\sigma} | \leq 2\varepsilon \log(|A|)  + \Bigl(1+\varepsilon \Bigr) \, \Hb\Bigl(\frac{\varepsilon}{1+\varepsilon}\Bigr) \, .
\end{equation*}
\end{mylem}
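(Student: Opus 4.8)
\medskip
\noindent\emph{Proof sketch (following~\cite[Lemma~2]{winter15}).}
Although the bound is taken from the cited work, we outline the short argument for completeness. The plan is to build a single state that interpolates between $\rho_{AB}$ and $\sigma_{AB}$, and then to combine two elementary facts about the conditional von Neumann entropy: it is affine under classical mixing, and adjoining or forgetting a classical register $X$ changes it by at most $\Hh{X}$.

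First I would pass to the Jordan decomposition $\rho_{AB}-\sigma_{AB}=\Delta_{+}-\Delta_{-}$ into orthogonally supported positive parts, so that $\Tr{\Delta_{+}}=\Tr{\Delta_{-}}=\tfrac{1}{2}\norm{\rho_{AB}-\sigma_{AB}}_{\trnorm}=:\varepsilon'\leq\varepsilon/2$. If $\varepsilon'=0$ then $\rho_{AB}=\sigma_{AB}$ and there is nothing to prove, so assume $\varepsilon'\in(0,1]$ and put $\omega_{\pm}:=\Delta_{\pm}/\varepsilon'\in\mathcal{D}(A\otimes B)$. The key point is the identity
\begin{equation*}
\tau_{AB}:=\frac{1}{1+\varepsilon'}\bigl(\rho_{AB}+\Delta_{-}\bigr)=\frac{1}{1+\varepsilon'}\bigl(\sigma_{AB}+\Delta_{+}\bigr)\,,
\end{equation*}
which realises $\tau_{AB}$ at once as $\tfrac{1}{1+\varepsilon'}\rho_{AB}+\tfrac{\varepsilon'}{1+\varepsilon'}\omega_{-}$ and as $\tfrac{1}{1+\varepsilon'}\sigma_{AB}+\tfrac{\varepsilon'}{1+\varepsilon'}\omega_{+}$. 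Next I would introduce a classical flag qubit $X$ and form the classical--quantum state $\theta_{XAB}$ that carries $\rho_{AB}$ when the flag reads $0$ and $\omega_{-}$ when it reads $1$, with respective probabilities $\tfrac{1}{1+\varepsilon'}$ and $\tfrac{\varepsilon'}{1+\varepsilon'}$, and likewise $\theta'_{XAB}$ with $\sigma_{AB},\omega_{+}$ in the two slots; by the displayed identity $\theta_{AB}=\theta'_{AB}=\tau_{AB}$. Affinity of the conditional entropy over the classical flag gives $\Hc{A}{BX}_{\theta}=\tfrac{1}{1+\varepsilon'}\Hc{A}{B}_{\rho}+\tfrac{\varepsilon'}{1+\varepsilon'}\Hc{A}{B}_{\omega_{-}}$ and likewise for $\theta'$, while the two facts above read $\Hc{A}{BX}\leq\Hc{A}{B}$ and $\Hc{A}{B}-\Hc{A}{BX}=\Ic{A}{X}{B}\leq\Hh{X}$, the latter because $X$ is classical. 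Combining these four inequalities and using $\Hc{A}{B}_{\omega_{\pm}}\in[-\log|A|,\log|A|]$ together with $\Hh{X}=\Hb\bigl(\tfrac{\varepsilon'}{1+\varepsilon'}\bigr)$, I would obtain, after multiplying through by $1+\varepsilon'$, the one-sided estimate
\begin{equation*}
\Hc{A}{B}_{\rho}-\Hc{A}{B}_{\sigma}\leq 2\varepsilon'\log|A|+(1+\varepsilon')\,\Hb\Bigl(\tfrac{\varepsilon'}{1+\varepsilon'}\Bigr)\,.
\end{equation*}

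By symmetry in $\rho$ and $\sigma$ the same bound holds with an absolute value on the left, so it remains only to relax $\varepsilon'$ to $\varepsilon/2$. Writing the right-hand side as $g(\varepsilon')$ with $g(t)=2t\log|A|+(1+t)\log(1+t)-t\log t$, one computes $g'(t)=2\log|A|+\log\tfrac{1+t}{t}>0$, so $g$ is increasing on $(0,1]$ and hence $g(\varepsilon')\leq g(\varepsilon/2)=\varepsilon\log|A|+\bigl(1+\tfrac{\varepsilon}{2}\bigr)\Hb\bigl(\tfrac{\varepsilon}{2+\varepsilon}\bigr)$, which is the assertion. I expect the only genuinely delicate step to be the construction of the interpolating state $\tau_{AB}$ together with the flag $X$: this is precisely the device that yields the sharp coefficient $\log|A|$ in front of $\varepsilon$, rather than the $2\log|A|$ one gets from a crude triangle-inequality estimate. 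The two entropic inequalities used are routine consequences of strong subadditivity and of the classicality of $X$, and the concluding monotonicity check is elementary.
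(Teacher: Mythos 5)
Your argument is correct and is essentially the proof of the cited reference \cite[Lemma~2]{winter15}: the paper itself states this lemma without proof, and your interpolating state $\tau_{AB}$ with the classical flag $X$, the almost-convexity/concavity bounds for the conditional entropy, and the final monotonicity relaxation from $\varepsilon'=\tfrac12\norm{\rho-\sigma}_{\trnorm}$ to $\varepsilon/2$ reproduce Winter's argument faithfully, including the sharp coefficient $\varepsilon\log|A|$. No gaps.
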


\section{Approximate degradable channels} \label{sec:main}
In this section we precisely define the concept of an approximate degradable channel. Theorem~\ref{thm:epsilonDegradableAdditivity} then shows that the desirable additivity properties of degradable channels are approximately inherited by $\varepsilon$-degradable channels. Proposition~\ref{prop:SDP} shows that the smallest possible $\varepsilon$ such that a given channel is $\varepsilon$-degradable can be efficiently computed via a semidefinite program. Finally we show that in the same spirit we can also define $\varepsilon$-anti-degradable channels (as done in Definition~\ref{def:epsiAntiDeg}) which approximately inherit the properties of anti-degradable channels (cf.\ Theorem~\ref{thm:antideg}).

\begin{mydef} [$\varepsilon$-degradable] \label{def:epsiDeg}
A channel $\Phi:\mathcal{S}(A) \to \mathcal{S}(B)$ is said to be \emph{$\varepsilon$-degradable} if there exists another channel $\Xi:\mathcal{S}(B) \to \mathcal{S}(E)$ such that $\norm{\Phi^{\setC}-\Xi \circ \Phi}_{\diamond} \leq \varepsilon$.
\end{mydef}
According to Definition~\ref{def:epsiDeg}, we call a channel \emph{$\varepsilon$-degradable} if the degradability condition is approximately satisfied. We note that a possible alternative characterization of approximate degradable channels is to call a channel \emph{$\varepsilon$-close-degradable} if it is close to a degradable channel  (cf.\ Definition~\ref{def:EpsiCloseDeg}). Within this article, we will focus on the first definition of approximate degradable channels. The latter approach is discussed in Appendix~\ref{app:alternativeDef} where we also mention major differences between these two definitions of approximate degradable channels which will justify why we favor $\varepsilon$-degradable channels over $\varepsilon$-close-degradable channels. 

By Definition~\ref{def:epsiDeg}, every channel is $\varepsilon$-degradable for some $\varepsilon \in [0,2]$ since $\norm{\Phi^{\setC}-\Xi \circ \Phi}_{\diamond} \leq \norm{\Phi^{\setC}}_{\diamond} +\norm{\Xi \circ \Phi}_{\diamond} \leq 2$.
The following theorem (Theorem~\ref{thm:epsilonDegradableAdditivity}) ensures that $\varepsilon$-degradable channels inherit the desirable additivity properties of the channel coherent and the channel private information that degradable channels offer, up to an error term that vanishes in the limit $\varepsilon \to 0$.

For a channel $\Phi$ from $A$ to $B$ and a degrading channel $\Xi$ from $B$ to $\widetilde{E}\simeq E$, choose Stinespring isometric dilations 
$V:A\hookrightarrow B\otimes E$ and $W:B\hookrightarrow \widetilde{E}\otimes F$, respectively. Then, define
\begin{align} 
  &U_{\Xi}(\Phi):=\max_{\rho \in \mathcal{D}(A)} \{ H(F|\widetilde{E})_\omega \, :  \ \omega_{E\widetilde{E}F}  = (W\otimes\1)V\rho V^\dagger(W\otimes\1)^\dagger \} \, .\label{eq:U}
\end{align}
\begin{myprop}
  \label{prop:Q1-vs-U}
  If $\Phi:\mathcal{S}(A) \to \mathcal{S}(B)$ is an $\varepsilon$-degradable channel with a degrading channel $\Xi:\mathcal{S}(B) \to \mathcal{S}(E)$, then
  \begin{equation*}
    \bigl| Q^{(1)}(\Phi) - U_{\Xi}(\Phi) \bigr| 
         \leq \frac{\varepsilon}{2}\log(|E|-1) + h\left(\frac{\varepsilon}{2}\right) \, .
  \end{equation*}
\end{myprop}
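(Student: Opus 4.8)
The plan is to express both quantities as optimizations of entropic functions over the same set of input states and then compare them term by term using the $\varepsilon$-degradability hypothesis together with the Fannes--Audenaert inequality (Lemma~\ref{lem:fannesAudenaert}). First I would fix an input state $\rho\in\mathcal{D}(A)$, purify it, apply the Stinespring isometry $V$ of $\Phi$ to obtain a pure state on $BE$ (tensored with the purifying reference), and then apply the Stinespring isometry $W$ of the degrading map $\Xi$ to the $B$ system, producing the state $\omega^{E\widetilde{E}F}$ appearing in \eqref{eq:U}. The key observation is that on this state, $H(F|\widetilde{E})_\omega = H(F\widetilde{E})_\omega - H(\widetilde{E})_\omega$, and since $WV$ is an isometry the global state on $E\widetilde{E}F$ (with the reference) is pure, so $H(F\widetilde{E})_\omega = H(E)_\omega = H(\Phi(\rho))$. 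Thus $H(F|\widetilde{E})_\omega = H(\Phi(\rho)) - H((\Xi\circ\Phi)(\rho))$, which is exactly the coherent-information-type expression for $\rho$ but with $\Phi^{\setC}(\rho)$ replaced by $(\Xi\circ\Phi)(\rho)$.

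Next I would invoke $\varepsilon$-degradability: $\norm{\Phi^{\setC}-\Xi\circ\Phi}_\diamond\leq\varepsilon$ implies in particular $\norm{\Phi^{\setC}(\rho)-(\Xi\circ\Phi)(\rho)}_{\trnorm}\leq\varepsilon$ for every $\rho\in\mathcal{D}(A)$ (restricting the diamond norm to non-extended inputs only weakens it). Both $\Phi^{\setC}(\rho)$ and $(\Xi\circ\Phi)(\rho)$ are states on a system of dimension $|E|$ (here $\widetilde{E}\simeq E$), so Lemma~\ref{lem:fannesAudenaert} gives
\begin{equation*}
  \bigl| H(\Phi^{\setC}(\rho)) - H((\Xi\circ\Phi)(\rho)) \bigr| \leq \frac{\varepsilon}{2}\log(|E|-1) + h\!\left(\frac{\varepsilon}{2}\right).
\end{equation*}
Combining with the identity from the previous paragraph, for each fixed $\rho$ we get
\begin{equation*}
  \Bigl| \bigl(H(\Phi(\rho)) - H(\Phi^{\setC}(\rho))\bigr) - H(F|\widetilde{E})_\omega \Bigr| \leq \frac{\varepsilon}{2}\log(|E|-1) + h\!\left(\frac{\varepsilon}{2}\right),
\end{equation*}
i.e.\ the integrand of $Q^{(1)}(\Phi)$ and that of $U_\Xi(\Phi)$ differ pointwise by at most the stated bound.

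Finally I would pass from the pointwise bound to the bound on the maxima: if two functions on a common domain differ pointwise by at most $\delta$, their suprema differ by at most $\delta$. Taking $\rho$ to be the maximizer of $Q^{(1)}$ gives a lower bound on $U_\Xi(\Phi)$, and taking $\rho$ to be the maximizer of $U_\Xi$ gives a lower bound on $Q^{(1)}(\Phi)$; together these yield $|Q^{(1)}(\Phi)-U_\Xi(\Phi)|\leq \frac{\varepsilon}{2}\log(|E|-1)+h(\varepsilon/2)$. The only mild subtlety — the main thing to get right rather than a genuine obstacle — is the bookkeeping showing that $H(F|\widetilde{E})_\omega$ really equals $H(\Phi(\rho))-H((\Xi\circ\Phi)(\rho))$, which hinges on $\omega^{E\widetilde{E}F}$ being (the marginal on $E\widetilde{E}F$ of) a pure state so that the relevant marginal entropies coincide with those of the complementary systems; once that is in place the rest is a direct application of Fannes--Audenaert.
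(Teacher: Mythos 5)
Your overall route is the paper's: identify $Q^{(1)}(\Phi)$ with $\max_{\rho}\bigl[H(\widetilde{E}F)_{\omega}-H(E)_{\omega}\bigr]$ and $U_{\Xi}(\Phi)$ with $\max_{\rho}\bigl[H(\widetilde{E}F)_{\omega}-H(\widetilde{E})_{\omega}\bigr]$, note that $\varepsilon$-degradability gives $\|\omega^{E}-\omega^{\widetilde{E}}\|_{\trnorm}\leq\varepsilon$, apply Fannes--Audenaert (Lemma~\ref{lem:fannesAudenaert}), and pass from the pointwise bound to the bound on the maxima. That is exactly how the paper argues, and those steps of your write-up are fine.

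However, the one step you single out as the ``mild subtlety'' is the one you get wrong. You justify $H(F\widetilde{E})_{\omega}=H(\Phi(\rho))$ via the chain $H(F\widetilde{E})_{\omega}=H(E)_{\omega}=H(\Phi(\rho))$, appealing to purity of the global state. Once the purifying reference $R$ is included, purity of the state on $RE\widetilde{E}F$ gives $H(F\widetilde{E})_{\omega}=H(RE)$, not $H(E)_{\omega}$; moreover $H(E)_{\omega}=H\bigl(\Phi^{\setC}(\rho)\bigr)$, which differs from $H\bigl(\Phi(\rho)\bigr)$ for mixed $\rho$ --- and the maximizer of $Q^{(1)}$ is generically mixed, since for pure inputs the coherent information vanishes. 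So both equalities in your chain fail in the relevant regime. Fortunately the identity you actually need is immediate without any purification: $\omega^{\widetilde{E}F}=W\Phi(\rho)W^{\dagger}$ with $W$ an isometry acting on $B$ alone, hence $H(\widetilde{E}F)_{\omega}=H\bigl(\Phi(\rho)\bigr)$, while $\omega^{\widetilde{E}}=\Xi\bigl(\Phi(\rho)\bigr)$ and $\omega^{E}=\Phi^{\setC}(\rho)$. With that repair your argument coincides with the paper's proof.
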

  Note that for degradable $\Phi$, i.e., $\varepsilon=0$, this reproduces the main observation of Devetak and Shor~\cite{shor2short}. The significance of $U_{\Xi}(\Phi)$ is hence that it approximates $Q^{(1)}(\Phi)$ --- with equality for $\varepsilon=0$~\cite{shor2short} ---, and at the same time it is given by a convex optimization problem, which in general considerably simplifies the task of computing it~\cite{ref:BoyVan-04}. Furthermore, $U_{\Xi}(\Phi)$ itself is additive.
\begin{mylem}
Let $\Phi_1, \Phi_2 :\mathcal{S}(A) \to \mathcal{S}(B)$ be channels and $\Xi_1,\Xi_2:\mathcal{S}(A) \to \mathcal{S}(E)$ be degrading channels. Then
\begin{equation}
  \label{eq:U-additive}
  U_{\Xi_1\otimes\Xi_2}(\Phi_1\otimes\Phi_2)
      = U_{\Xi_1}(\Phi_1) + U_{\Xi_2}(\Phi_2)\ .
\end{equation}
\end{mylem}
 \begin{proof}
This proof follows the original argument of Devetak and Shor for degradable channels~\cite[App.~B]{shor2short}. Indeed, first, for channels $\Phi_i$ and degrading channels $\Xi_i$ ($i=1,2$), we have
\begin{equation*}
  U_{\Xi_1\otimes\Xi_2}(\Phi_1\otimes\Phi_2)
      \geq U_{\Xi_1}(\Phi_1) + U_{\Xi_2}(\Phi_2)\, ,
\end{equation*}
by choosing a product input state.
Second, for a state $\rho_{A_1A_2}$ and the corresponding
$\omega_{E_1\widetilde{E}_1F_1 E_2\widetilde{E}_2 F_2}$,
\begin{align*}
  H(F_1F_2|\widetilde{E}_1\widetilde{E}_2) 
  &\leq H(F_1|\widetilde{E}_1\widetilde{E}_2) + H(F_2|\widetilde{E}_1\widetilde{E}_2) \\
  &\leq H(F_1|\widetilde{E}_1) + H(F_2|\widetilde{E}_2)\, ,
\end{align*}
by applying strong subadditivity\footnote{Recall that the celebrated strong subadditivity of quantum entropy~\cite{LieRus73_1,LieRus73} ensures that for any state $\rho_{ABC}$ we have $H(A|B)_{\rho} \geq H(A|BC)_{\rho}$.} three times. Hence maximizing over input states
we get~\eqref{eq:U-additive}.
 \end{proof}

\begin{proof}[Proof of Proposition~\ref{prop:Q1-vs-U}]
The single-letter coherent information is
\begin{align*}
  Q^{(1)}(\Phi) &= \max_{\rho \in \mathcal{D}(A)} H\bigl(\Phi(\rho)\bigr) - H\bigl(\Phi^\setC(\rho)\bigr)\\
                       &= \max_{\rho\in \mathcal{D}(A)} H(\widetilde{E}F)_{\omega}-H(E)_{\omega} \, ,
\end{align*}
with respect to the state $\omega$ introduced in~\eqref{eq:U}. Since by assumption $\Phi$ is $\varepsilon$-degradable, 
${\|\omega_E-\omega_{\widetilde{E}}\|_1} \leq \varepsilon$, and invoking the Fannes-Audenaert inequality (see Lemma~\ref{lem:fannesAudenaert}) to
replace $H(E)$ by $H(\widetilde{E})$, the claim follows.
\end{proof}

\begin{mythm}[Properties of $\varepsilon$-degradable channels] \label{thm:epsilonDegradableAdditivity}
If $\Phi:\mathcal{S}(A) \to \mathcal{S}(B)$ is an $\varepsilon$-degradable channel with a degrading channel $\Xi:\mathcal{S}(B) \to \mathcal{S}(E)$, then
\begin{enumerate}[(i)]
\item $Q^{(1)}(\Phi) \leq Q(\Phi) \leq Q^{(1)}(\Phi) +  \frac{\varepsilon}{2}  \log(|E|-1) +  \Hb(\frac{\varepsilon}{2}) +\varepsilon \log |E| + {\bigl(1+\frac{\varepsilon}{2} \bigr)} \Hb\bigl(\frac{\varepsilon}{2+\varepsilon} \bigr) $ , \label{it:i}
\item $Q(\Phi) \leq  U_{\Xi}(\Phi) + \varepsilon \log |E| +  \bigl( 1+\frac{\varepsilon}{2} \bigr) h\bigl(\frac{\varepsilon}{2+\varepsilon}\bigr)$ , \label{it:i2}
\item $P^{(1)}(\Phi) \leq P(\Phi)\leq P^{(1)}(\Phi) + \frac{\varepsilon}{2}  \log(|E|-1)  +   \Hb(\frac{\varepsilon}{2}) +  3 \varepsilon \log |E|  +  3(1+\frac{\varepsilon}{2}) \Hb(\frac{\varepsilon}{2+\varepsilon}) $  , \label{it:iii}
\item $Q^{(1)}(\Phi) \leq P^{(1)}(\Phi)\leq Q^{(1)}(\Phi)  +  \frac{\varepsilon}{2}  \log(|E|-1) +  \Hb(\frac{\varepsilon}{2}) + \varepsilon \log |E| +  \bigl(1+\frac{\varepsilon}{2} \bigr) \Hb\bigl(\frac{\varepsilon}{2+\varepsilon} \bigr)$\, . \label{it:ii}
\end{enumerate}
\end{mythm}

By combining the four statements given in the theorem above we can generate other interesting upper bounds such as
\begin{align*}
P(\Phi) 
& \leq P^{(1)}(\Phi)+  \frac{\varepsilon}{2}  \log(|E|-1) +   \Hb\left(\frac{\varepsilon}{2}\right) + 3 \varepsilon \log |E| + 3\left(1+\frac{\varepsilon}{2}\right) \Hb\left(\frac{\varepsilon}{2+\varepsilon}\right) \\
&\leq Q^{(1)}(\Phi) +  \varepsilon \log(|E|-1) +   2 \Hb\left(\frac{\varepsilon}{2}\right) + 4 \varepsilon \log |E| + 4\left(1+\frac{\varepsilon}{2}\right) \Hb\left(\frac{\varepsilon}{2+\varepsilon}\right) \\
&\leq Q(\Phi) +  \frac{3\varepsilon}{2} \log(|E|-1) +   3 \Hb\left(\frac{\varepsilon}{2}\right) + 5 \varepsilon \log |E| + 5\left(1+\frac{\varepsilon}{2}\right) \Hb\left(\frac{\varepsilon}{2+\varepsilon}\right) \\
&\leq U_{\Xi}(\Phi) +  \frac{3\varepsilon}{2} \log(|E|-1) +   3 \Hb\left(\frac{\varepsilon}{2}\right) + 6 \varepsilon \log |E| + 6\left(1+\frac{\varepsilon}{2}\right) \Hb\left(\frac{\varepsilon}{2+\varepsilon}\right) \, ,
\end{align*}
where we used statements~\eqref{it:iii},~\eqref{it:ii},~\eqref{it:i}, and~\eqref{it:i2}.

We note that there exist different ways to prove the statements~\eqref{it:i}-\eqref{it:ii} of Theorem~\ref{thm:epsilonDegradableAdditivity}. On the one hand, they follow easily by exploiting the properties of the quantity \(U_{\Xi}(\Phi)\). On the other hand, they could also be shown by a telescoping sum technique, which we here use to prove the statement~\eqref{it:iii}. A similar result would also follow by combining~\eqref{it:i2} and~\eqref{it:ii}, but would give slightly worse bounds. To prove statement~\eqref{it:iii}, we need two preliminary lemmas. To simplify notation, let us define a multivariate mutual information for $n$ quantum systems $A_1,\dots,A_n$ and a state $\rho_{A_1 A_2 \ldots A_n} \in \mathcal{D}(\otimes_{i=1}^n A_i)$ as $I(A_1:A_2:\ldots:A_n)_{\rho}:=\sum_{i=1}^nH(A_i)_{\rho} - \Hh{A_1 A_2 \ldots A_n}_{\rho}$.
\begin{mylem} \label{lem:help}
Let $\varepsilon \geq 0$, $\Phi:\mathcal{S}(A)\to \mathcal{S}(B)$ and $\Xi:\mathcal{S}(B)\to \mathcal{S}(E)$ be two channels such that $\norm{\Phi^{\setC}-\Xi\circ \Phi}_{\diamond} \leq \varepsilon$. Let $j\in[n]$ and consider states $\rho_{A_1\ldots A_n}$, $\sigma_{A_1 \ldots A_{j-1} B_j A_{j+1} \ldots A_n}:=\Phi_{A_j \to B_j}(\rho_{A_1\ldots A_n})$, $\eta_{A_1 \ldots A_{j-1} E_j A_{j+1} \ldots A_n}:=\Phi^{\setC}_{\,A_j \to E_j}(\rho_{A_1\ldots A_n})$ and $\nu_{A_1 \ldots A_{j-1} E_j A_{j+1} \ldots A_n}:=\Xi_{B_j \to E_j} \circ \Phi_{A_j \to B_j}(\rho_{A_1\ldots A_n})=\Xi_{B_j \to E_j}(\sigma_{A_1 \ldots A_{j-1} B_j A_{j+1} \ldots A_n})$. Then,
\begin{align*}
&I(A_1\!:\!\ldots\!:\!A_{j-1}\!:\!B_j\!:\!A_{j+1}\!:\!\ldots:A_n)_{\sigma}  - I(A_1\!:\!\ldots\!:\!A_{j-1}\!:\!E_j\!:\!A_{j+1}\!:\!\ldots:A_n)_{\eta}\\
&\hspace{20mm} \geq- \frac{\varepsilon}{2}  \log(|E|\!-\!1) \!-\!  \Hb\Bigl(\frac{\varepsilon}{2}\Bigr)  \!-\!\varepsilon \log(|E|) \!-\! \Bigl(1\!+\!\frac{\varepsilon}{2} \Bigr) \Hb\Bigl(\frac{\varepsilon}{2\!+\!\varepsilon} \Bigr) \, .
\end{align*}
\end{mylem}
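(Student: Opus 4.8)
\textbf{Proof proposal for Lemma~\ref{lem:help}.}

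The plan is to relate the two multivariate mutual informations in question to a single application of the continuity estimates already collected in the preliminaries. The key observation is that the difference $I(A_1\!:\!\ldots\!:\!B_j\!:\!\ldots\!:\!A_n)_{\sigma} - I(A_1\!:\!\ldots\!:\!E_j\!:\!\ldots\!:\!A_n)_{\eta}$ should be compared through the intermediate state $\nu$, in which the $j$-th system has been obtained by applying the degrading map $\Xi$ to the output $B_j$. Concretely, I would first bound the difference between the $\sigma$-quantity and the corresponding $\nu$-quantity using the data processing inequality; since $\nu$ is obtained from $\sigma$ by applying the channel $\Xi$ on system $B_j$, Lemma~\ref{lem:DPmi} gives immediately
\begin{equation*}
I(A_1\!:\!\ldots\!:\!B_j\!:\!\ldots\!:\!A_n)_{\sigma} \geq I(A_1\!:\!\ldots\!:\!E_j\!:\!\ldots\!:\!A_n)_{\nu}\, .
\end{equation*}
So the whole task reduces to lower bounding $I(A_1\!:\!\ldots\!:\!E_j\!:\!\ldots\!:\!A_n)_{\nu} - I(A_1\!:\!\ldots\!:\!E_j\!:\!\ldots\!:\!A_n)_{\eta}$, where now $\nu$ and $\eta$ have the \emph{same} system structure, and differ only because $\nu$ has $E_j$ produced by $\Xi\circ\Phi$ while $\eta$ has $E_j$ produced by $\Phi^\setC$.

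Next I would control the distance between $\nu$ and $\eta$. Since $\norm{\Phi^\setC - \Xi\circ\Phi}_\diamond \leq \varepsilon$, and since $\nu$ and $\eta$ arise by applying $\Xi\circ\Phi$ respectively $\Phi^\setC$ to the $A_j$-marginal of the \emph{same} purification/extension $\rho^{A_1\ldots A_n}$ (the other systems being untouched), the definition of the diamond norm gives $\norm{\nu - \eta}_{\trnorm} \leq \varepsilon$. Now I expand the multivariate mutual information in the form used in Lemma~\ref{lem:DPmi}, namely $I(A_1\!:\!\ldots\!:\!E_j\!:\!\ldots\!:\!A_n) = H(E_j) - H(E_j \mid A_1\ldots A_{j-1}A_{j+1}\ldots A_n)$ plus the term $I(A_1\!:\!\ldots\!:\!A_{j-1}\!:\!A_{j+1}\!:\!\ldots\!:\!A_n)$, which is identical in $\nu$ and $\eta$ (it does not involve the $j$-th system) and hence cancels. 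What remains is the difference of the $H(E_j)$-terms, bounded by the Fannes-Audenaert inequality (Lemma~\ref{lem:fannesAudenaert}) yielding $\frac{\varepsilon}{2}\log(|E|-1) + \Hb(\varepsilon/2)$, plus the difference of the conditional entropies $H(E_j\mid \text{rest})$, bounded by the Alicki-Fannes-Winter inequality (Lemma~\ref{lem:alickiFannes}) yielding $\varepsilon\log|E| + (1+\varepsilon/2)\Hb(\varepsilon/(2+\varepsilon))$. Summing these and inserting the sign gives exactly the claimed lower bound.

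The main obstacle I anticipate is bookkeeping rather than conceptual: one must make sure that the cancellation of the common term $I(A_1\!:\!\ldots\!:\!A_{j-1}\!:\!A_{j+1}\!:\!\ldots\!:\!A_n)$ is genuinely exact — i.e.\ that the reduced states of $\nu$ and $\eta$ on the complement of the $j$-th system coincide, which holds because both are equal to $\rho^{A_1\ldots A_{j-1}A_{j+1}\ldots A_n}$, the channel acting only on $A_j$. One must also be careful that the two continuity bounds are applied to the right pair of states: Fannes-Audenaert to the $|E|$-dimensional marginals $\nu^{E_j}$ and $\eta^{E_j}$, and the conditional-entropy continuity bound to the full states $\nu$ and $\eta$ viewed as states on $E_j$ (the ``$A$'' system, of dimension $|E|$) tensored with the remaining systems (the ``$B$'' system). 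Provided the $\varepsilon$-degradability hypothesis is correctly unpacked into $\norm{\nu-\eta}_{\trnorm}\leq\varepsilon$, the estimate is then routine.
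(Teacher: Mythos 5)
Your proposal is correct and follows essentially the same route as the paper's proof: both pass through the intermediate state $\nu$, use Lemma~\ref{lem:DPmi} to compare $\sigma$ with $\nu$, and use Lemma~\ref{lem:fannesAudenaert} on the marginals together with Lemma~\ref{lem:alickiFannes} on the conditional entropies to compare $\nu$ with $\eta$, yielding the identical error terms. The only difference is cosmetic — you apply the data-processing step first and the continuity estimates second, whereas the paper does it in the opposite order.
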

\begin{proof}
By definition of the multivariate mutual information and since $\Hh{A_i}_{\sigma}=\Hh{A_i}_{\eta}$ for $i\ne j$ we obtain
\begin{align*}
&I(A_1:\ldots:A_{j-1}:B_j:A_{j+1}:\ldots:A_n)_{\sigma} - I(A_1:\ldots:A_{j-1}:E_j:A_{j+1}:\ldots:A_n)_{\eta} \\
&\hspace{5mm} = \Hh{B_j}_{\sigma}-\Hh{E_j}_{\eta}- \Hh{A_1\ldots A_{j-1}B_jA_{j+1}\ldots A_n}_{\sigma} +\Hh{A_1\ldots A_{j-1}E_jA_{j+1}\ldots A_n}_{\eta} \\
&\hspace{5mm} =  \Hh{B_j}_{\sigma}-\Hh{E_j}_{\eta} - \Hc{B_j}{A_1 \ldots A_{j-1} A_{j+1} \ldots A_n}_{\sigma} + \Hc{E_j}{A_1 \ldots A_{j-1} A_{j+1} \ldots A_n}_{\eta}\\
&\hspace{5mm} \geq \Hh{B_j}_{\sigma}-\Hh{E_j}_{\nu} - \frac{\varepsilon}{2}  \log(|E|-1)  - \Hc{B_j}{A_1 \ldots A_{j-1} A_{j+1} \ldots A_n}_{\sigma} -  \Hb(\varepsilon/2) \nonumber \\
&\hspace{15mm}+ \Hc{E_j}{A_1 \ldots A_{j-1} A_{j+1} \ldots A_n}_{\nu} -\varepsilon \log |E| - \Bigl(1+\frac{\varepsilon}{2} \Bigr) \Hb\Bigl(\frac{\varepsilon}{2+\varepsilon} \Bigr)\\
&\hspace{5mm} =\I{B_j}{A_1 \ldots A_{j-1} A_{j+1} \ldots A_n}_{\sigma}-\I{E_j}{A_1 \ldots A_{j-1} A_{j+1} \ldots A_n}_{\nu} - \frac{\varepsilon}{2}  \log(|E|-1)\\
&\hspace{15mm} -  \Hb(\varepsilon/2)   -\varepsilon \log |E| - \Bigl(1+\frac{\varepsilon}{2} \Bigr) \Hb\Bigl(\frac{\varepsilon}{2+\varepsilon} \Bigr)\\
&\hspace{5mm} \geq - \frac{\varepsilon}{2}  \log(|E|-1) -  \Hb(\varepsilon/2)  -\varepsilon \log |E|  - \Bigl(1+\frac{\varepsilon}{2} \Bigr) \Hb\Bigl(\frac{\varepsilon}{2+\varepsilon} \Bigr)\, ,
\end{align*}
where the second equality uses that 
\begin{align*}
&\Hh{A_1 \ldots A_{j-1} A_{j+1} \ldots A_n}_{\sigma} = \Hh{A_1 \ldots A_{j-1} A_{j+1} \ldots A_n}_{\eta} \, .
\end{align*}
The first inequality follows from Fannes-Audenaert inequality (cf.~Lemma~\ref{lem:fannesAudenaert}) and the Alicki-Fannes inequality (cf.~Lemma~\ref{lem:alickiFannes}). The final inequality follows from monotonicity of mutual information under local channels, which is also known as the data processing inequality for mutual information.
\end{proof}

\begin{proof}[Proof of Theorem~\ref{thm:epsilonDegradableAdditivity}]  \
We start by proving claim~\eqref{it:i2}. For $n$ uses of the channel $\Phi$, consider an input
state $\rho$ on $A^n = A_1\ldots A_n$ and the corresponding
$\omega$ on $E^n\widetilde{E}^nF^n$. Then, the coherent information of the
channel equals
\begin{align*}
  H(\widetilde{E}^nF^n) - H(E^n) 
  &= H(\widetilde{E}^nF^n) - H(\widetilde{E}^n) + H(\widetilde{E}^n) - H(E^n) \\
  &= H(F^n|\widetilde{E}^n) + \sum_{t=1}^n \bigl[ H(\widetilde{E}_t|\widetilde{E}_{<t}E_{>t})-H(E_t|\widetilde{E}_{<t}E_{>t})\bigr] \, ,
\end{align*}
``telescoping'' the difference into a sum of ``local'' differences.
Each term in the latter sum has its modulus bounded by 
$\delta := \varepsilon \log|E| + \left( 1+\frac{\varepsilon}{2} \right) h\!\left(\frac{\varepsilon}{2+\varepsilon}\right)$,
via Lemma~\ref{lem:alickiFannes}.
Maximizing over input states and using the additivity of 
$U_{\Xi}(\Phi)$, Equation~\eqref{eq:U-additive}, yields
\[
  Q^{(1)}\bigl(\Phi^{\otimes n}\bigr) \leq n\, U_{\Xi}(\Phi) + n \delta \, .
\]
Dividing by $n$ and taking the limit $n\rightarrow\infty$ gives the statement of claim~\eqref{it:i2}.

We next prove claim~\eqref{it:i}. The lower bound $Q^{(1)}(\Phi)\leq Q(\Phi)$ is immediate. Combining claim~\eqref{it:i2} with Proposition~\ref{prop:Q1-vs-U} immediately proves claim~\eqref{it:i}. Appendix~\ref{app:alternativeProof} presents an alternative proof for claim~\eqref{it:i}.

We next prove claim~\eqref{it:iii}.
Note that $P^{(1)}(\Phi) \leq P(\Phi)$ for an arbitrary channel $\Phi$ is proven in \cite[p.~323]{wilde_book}, therefore only the upper bound on $P(\Phi)$ needs to be shown. Let $\xi_1:=2 \varepsilon \log |E| + 2(1+\frac{\varepsilon}{2}) \Hb(\frac{\varepsilon}{2+\varepsilon})$, $\xi_2:=\frac{\varepsilon}{2}  \log(|E|-1) +  \Hb(\varepsilon/2)  +\varepsilon \log |E| + \Bigl(1+\frac{\varepsilon}{2} \Bigr) \Hb\Bigl(\frac{\varepsilon}{2+\varepsilon} \Bigr)$ and suppose $\rho_{X A'_1 \ldots A'_n}$ is the state that maximizes $P^{(1)}(\Phi^{\otimes n})$, where 
\begin{equation*}
\rho_{X A'_1 \ldots A'_n}:= \sum_{x \in \mathcal{X}} P_X(x) \braket{x}{x}_{X} \otimes \rho^x_{A'_1 \ldots A'_n} \, ,
\end{equation*}
and let 
\begin{align*}
\rho_{X B_1 E_1 \ldots B_n E_n} := \left( \bigotimes_{i=1}^n V^{i}_{A_i \to B_i E_i} \right) \rho_{X A'_1 \ldots A'_n} \left( \bigotimes_{i=1}^n V^{i}_{A_i \to B_i E_i } \right)^{\dagger}
\end{align*}
be the state that arises when sending $\rho_{X A'_1 \ldots A'_n}$ through $\Phi^{\otimes n}$, with $V^{i}_{A_i \to B_i E_i}$ denoting the isometric extension of the $i$-th channel $\Phi$. Consider a spectral decomposition of each state $\rho^{x}_{A'_1 \ldots A'_n}$  as $\rho^{x}_{A'_1 \ldots A'_n} = \sum_{y \in \mathcal{Y}} P_{Y|X}(y|x) \varphi^{x,y}_{A'_1 \ldots A'_n}$, where each state $\varphi^{x,y}_{A'_1 \ldots A'_n}$ is pure. Let $\sigma_{X Y A'_1 \ldots A'_n}$ be an extension of $\rho_{X A'_1 \ldots A'_n}$ with
\begin{align*}
\sigma_{XYA'_1 \ldots A'_n}:= \sum_{x \in \mathcal{X},\, y \in \mathcal{Y}} P_{Y|X}(y|x) P_X(x) \braket{x}{x}_X \otimes \braket{y}{y}_{Y} \otimes \varphi^{x,y}_{A'_1 \ldots A'_n}
\end{align*}
and let 
\begin{align*}
\sigma_{XYB_1 E_1 \ldots B_n E_n} :=  \left( \bigotimes_{i=1}^n V^{i}_{A_i \to B_i E_i }\right) \sigma_{XYA'_1 \ldots A'_n} \left( \bigotimes_{i=1}^n V^{i}_{A_i \to B_i E_i}  \right)^{\dagger}
\end{align*}
be the state that arises when sending $\sigma_{XYA'_1 \ldots A'_n}$ through $\Phi^{\otimes n}$. By assumption 
\begin{align}
P^{(1)}(\Phi^{\otimes n}) 
&\hspace{0mm}= \I{X}{B_1 \ldots B_n}_{\rho} - \I{X}{E_1 \ldots E_n}_{\rho} \nonumber\\
&\hspace{0mm}= \I{X}{B_1 \ldots B_n}_{\sigma} - \I{X}{E_1 \ldots E_n}_{\sigma} \label{eq:dpure}\\
&\hspace{0mm}=\I{XY}{B_1 \ldots B_n}_{\sigma} \!- \! \I{XY}{E_1 \ldots E_n}_{\sigma} \nonumber \\
&\hspace{10mm}-\bigl(\Ic{Y}{B_1 \ldots B_n}{X}_{\sigma}\!-\!\Ic{Y}{E_1 \ldots E_n}{X}_{\sigma}  \bigr)\nonumber\\
&\hspace{0mm} \leq \I{XY}{B_1 \ldots B_n}_{\sigma} \!- \! \I{XY}{E_1 \ldots E_n}_{\sigma} + n \xi_1 \label{eq:fINE}\\
&\hspace{0mm}=\Hh{B_1 \ldots B_n}_{\sigma}- \Hc{B_1 \ldots B_n}{XY}_{\sigma} \nonumber\\ 
&\hspace{10mm}- \Hh{E_1 \ldots E_n}_{\sigma} + \Hc{E_1 \dots E_n}{XY}_{\sigma} + n \xi_1 \nonumber\\
&\hspace{0mm}=\Hh{B_1 \ldots B_n}_{\sigma}- \Hc{B_1 \ldots B_n}{XY}_{\sigma}\nonumber \\
&\hspace{10mm} - \Hh{E_1 \ldots E_n}_{\sigma} + \Hc{B_1 \dots B_n}{XY}_{\sigma} + n \xi_1 \label{eq:dvv}\\
&\hspace{0mm}=\Hh{B_1 \ldots B_n}_{\sigma} - \Hh{E_1 \ldots E_n}_{\sigma} + n \xi_1 \nonumber\\
&\hspace{0mm}= \sum_{i=1}^n \bigl( \Hh{B_i}_{\sigma} - \Hh{E_i}_{\sigma} \bigr) - \bigl(I(B_1: \ldots :B_n)_{\sigma}- I(E_1: \ldots :E_n)_{\sigma} \bigr) + n \xi_1 \nonumber \\
&\hspace{0mm}\leq \sum_{i=1}^n \bigl( \Hh{B_i}_{\sigma} - \Hh{E_i}_{\sigma} \bigr) +  n( \xi_1 + \xi_2) \label{ineq:two} \\
&\hspace{0mm}\leq n Q^{(1)}(\Phi) + n(\xi_1 + \xi_2) \label{ineq:three} \\
&\hspace{0mm}\leq n P^{(1)}(\Phi) + n (\xi_1 + \xi_2) \label{ineq:finally}\, , 
\end{align}
where \eqref{eq:dpure} is valid since $\rho_{X B_1 E_1 \ldots B_n E_n} = \Trp{Y}{\sigma_{XYB_1 E_1 \ldots B_n E_n}}$. Inequality \eqref{eq:fINE} follows by applying $n$ times in sequence the strengthened Alicki-Fannes inequality (cf.\ Lemma~\ref{lem:alickiFannes}) for the quantum mutual information, followed by the data processing inequality together with the fact that $X$ is classical and the assumption that $\norm{\Phi^\setC - \XiÊ\circ \Phi}_{\diamond} \leq \varepsilon$. More precisely, for $i \in [n]$ let
\begin{equation*}
\eta^{i}_{XY E_1 \ldots E_n} := \left( \Bigl( \bigotimes_{j=1}^{n-1} \Phi^\setC \Bigr) \otimes \Bigl( \bigotimes_{k=1}^i \xi \circ \Phi \Bigr) \right) (\sigma_{XYA'_1\ldots A'_n}) \, .
\end{equation*}
By assumption, for all $i=2,\ldots,n$ we have $\norm{\eta_i - \eta_{i-1}}_\trnorm \leq \varepsilon$. Thus
\begin{align}
&|\Ic{Y}{E_1 \ldots E_n}{X}_{\eta_i}- \Ic{Y}{E_1 \ldots E_n}{X}_{\eta_{i-1}}|  \nonumber \\
&\hspace{10mm}= |\Ic{Y}{E_i}{X,E_1\ldots E_{i-1} E_{i+1} \ldots E_n}_{\eta^i} -  \Ic{Y}{E_i}{X,E_1\ldots E_{i-1} E_{i+1} \ldots E_n}_{\eta^{i-1}}| \label{eq:4f}\\
& \hspace{10mm} \leq \xi_1 \, , \label{eq:fff}
\end{align}
where \eqref{eq:4f} follows since by construction the states $\eta_i$ and $\eta_{i-1}$ differ only on subsystem $E_i$ and \eqref{eq:fff} is a consequence of applying the strengthened Alicki-Fannes inequality (cf.\ Lemma~\ref{lem:alickiFannes}) twice. Applying the argument described by \eqref{eq:4f} and \eqref{eq:fff} $n$ times in sequence shows that
\begin{align}
&-\Ic{Y}{B_1 \ldots B_n}{X}_{\sigma}+\Ic{Y}{E_1 \ldots E_n}{X}_{\sigma} \nonumber \\
&\hspace{20mm} \leq - \Ic{Y}{B_1 \ldots B_n}{X}_{\sigma} + \Ic{Y}{E_1 \ldots E_n}{X}_{\eta^n} + n \xi_1 \\
&\hspace{20mm}   \leq n \xi_1\, ,
\end{align}
where the final step uses the data processing inequality.
 Equation \eqref{eq:dvv} holds as $\sigma$ is pure on $B_1 E_1 \ldots B_n E_n$ when conditioning on $XY$. The second inequality \eqref{ineq:two} follows by $n$ times applying Lemma~\ref{lem:help} and \eqref{ineq:three} holds as $\rho$ is not necessarily that state maximizing the coherent information. Finally, \eqref{ineq:finally} follows from the fact that $Q^{(1)}(\Phi) \leq P^{(1)}(\Phi)$ is true for all channels $\Phi$ \cite[Theorem~12.6.3]{wilde_book}. This then proves claim~\eqref{it:iii}.

We finally prove claim \eqref{it:ii} of Theorem~\ref{thm:epsilonDegradableAdditivity}. Note that $Q^{(1)}(\Phi) \leq P^{(1)}(\Phi)$ is true for all channels $\Phi$ \cite[Theorem~12.6.3]{wilde_book}, hence only the upper bound for $P^{(1)}(\Phi)$ needs to be shown. Consider a classical-quantum state $\rho_{X A'}= \sum_{x \in \mathcal{X}} P_{X}(x) \braket{x}{x}_X \otimes \rho^x_{A'}$ and let $\sigma_{XBE}=V_{A' \to BE} \rho_{X A'} (V_{A' \to BE})^{\dagger}$, where $V_{A' \to BE}$ is the isometric extension of $\Phi$. Each state $\rho_{A'}^x$ can be decomposed as $\rho^x_{A'} = \sum_{y \in \mathcal{Y}} P_{Y|X}(y|x) \varphi^{x,y}_{A'}$, where each state $\varphi^{x,y}_{A'}$ is pure. Consider the following extension of the state $\sigma_{XBE}$
\begin{align*}
\eta_{XYBE}:= \sum_{x \in \mathcal{X},\, y \in \mathcal{Y}} \!\!\!\!\! P_{Y|X}(y|x) P_{X}(x) \braket{x}{x}_{X} \otimes \braket{y}{y}_{Y} \otimes V_{A' \to BE} \varphi_{A'}^{x,y} (V_{A' \to BE})^{\dagger}\, .
\end{align*}
Suppose that $\sigma_{XBE}$ maximizes the private information and let $\xi:= \frac{\varepsilon}{2}  \log(|E|-1)+  \Hb(\frac{\varepsilon}{2})  +\varepsilon \log |E| + \bigl(1+\frac{\varepsilon}{2} \bigr) \Hb\bigl(\frac{\varepsilon}{2+\varepsilon} \bigr)$, then
\begin{align}
P^{(1)}(\Phi)&=\I{X}{B}_{\sigma} - \I{X}{E}_{\sigma}\nonumber \\
&= \I{X}{B}_{\eta} - \I{X}{E}_{\eta} \label{eq:twoo}\\
&= \I{XY}{B}_{\eta} - \Ic{Y}{B}{X}_{\eta} - \I{XY}{E}_{\eta} + \Ic{Y}{E}{X}_{\eta}\label{eq:Cain}\\
&=\I{XY}{B}_{\eta}-\I{XY}{E}_{\eta}  - \left(\Ic{Y}{B}{X}_{\eta} - \Ic{Y}{E}{X}_{\eta} \right)\nonumber\\
&\leq \I{XY}{B}_{\eta}-\I{XY}{E}_{\eta} + \xi \nonumber\\
&= \Hh{B}_{\eta}-\Hc{B}{XY}_{\eta}- \Hh{E}_{\eta} +\Hc{E}{XY}_{\eta} + \xi\nonumber\\
&=\Hh{B}_{\eta}-\Hc{B}{XY}_{\eta}- \Hh{E}_{\eta} +\Hc{B}{XY}_{\eta} + \xi \label{eq:purr}\\
&= \Hh{B}_{\eta}- \Hh{E}_{\eta} + \xi \nonumber\\
&\leq Q^{(1)}(\Phi) + \xi\, ,\nonumber
\end{align}
where \eqref{eq:twoo} follows since $\sigma_{XBE}=\Trp{Y}{\eta_{XYBE}}$ and \eqref{eq:Cain} is a simple application of the chain rule. The inequality step follows from Lemma~\ref{lem:help} for $n=2$ and since $X$ is a classical system. Equation \eqref{eq:purr} is true since $\eta$ is pure on $EB$ when conditioning on $XY$. The final inequality follows since $\eta$ is not necessarily the state that maximize the coherent information. This proves statement \eqref{it:ii} of Theorem~\ref{thm:epsilonDegradableAdditivity}.

\end{proof}

By Definition~\ref{def:epsiDeg} it can be verified immediately that if a channel $\Phi$ is $\varepsilon$-degradable it is also $\varepsilon'$-degradable for all $\varepsilon'\geq \varepsilon$. The smallest possible parameter $\varepsilon$ such that $\Phi$ is $\varepsilon$-degradable is given by
\begin{align} \label{opt:epsi}
\begin{split}
\varepsilon_{\Phi}:= \,&\inf\limits_{\Xi} \, 		 \norm{\Phi^{\setC}-\Xi\circ \Phi}_{\diamond}\\
& \st \,\, \Xi:\mathcal{S}(B) \to \mathcal{S}(E) \textnormal{ is cptp} \ .
\end{split}
\end{align}
\begin{myprop} \label{prop:SDP}
The optimization problem \eqref{opt:epsi} can be expressed as a semidefinite program.
\end{myprop}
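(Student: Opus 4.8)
The plan is to express the diamond-norm distance between $\Phi^\setC$ and $\Xi\circ\Phi$ as a semidefinite program (SDP) in the Choi matrices, and then to observe that the constraint "$\Xi$ is cptp'' is itself a linear-matrix-inequality-plus-affine-equality constraint on $J(\Xi)$, so that the two SDPs can be merged into a single one with $J(\Xi)$ as (part of) the optimization variable. First I would recall the SDP characterization of the diamond norm due to Watrous: for a Hermiticity-preserving map $\Psi:\mathcal S(B)\to\mathcal S(E)$ (here $\Psi=\Phi^\setC-\Xi\circ\Phi$, precomposed with the relevant identities so that it maps $\mathcal S(A)$), one has
\begin{equation*}
\tfrac12\norm{\Psi}_\diamond
= \min\bigl\{\, \norm{\Trp{E}{Z}}_{\opnorm} \ :\ Z\geq 0,\ Z\geq J(\Psi)\,\bigr\},
\end{equation*}
and the operator-norm objective $\norm{\Trp{E}{Z}}_{\opnorm}\le\mu$ is in turn the LMI $\Trp{E}{Z}\le\mu\1$. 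Introducing the scalar variable $\mu$ and using the identity $J(\Xi\circ\Phi) = \Trp{B}{(J(\Xi)\otimes\1_A)(\1_E\otimes J(\Phi)^{\transp_B})}$ (or any convenient linear formula expressing the Choi matrix of a composition linearly in $J(\Xi)$ when $J(\Phi)$ is fixed), the whole problem becomes: minimize $2\mu$ over Hermitian $Z\ge 0$, Hermitian $J(\Xi)\ge 0$, and $\mu\in\R$, subject to $\Trp{E}{Z}\preceq\mu\1$, $\Trp{E}{J(\Xi)}=\1_B$, and $Z - J(\Phi^\setC) + J(\Xi\circ\Phi)\succeq 0$.

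The key steps, in order, are: (1) rewrite \eqref{opt:epsi} with the factor $\tfrac12$ absorbed, so that it asks for $2$ times Watrous's minimization; (2) substitute the dual (minimization) SDP form of the diamond norm, turning the objective into a linear functional $2\mu$ and picking up the matrix variable $Z$; (3) linearize $\Xi\mapsto J(\Xi\circ\Phi)$ in $J(\Xi)$, which is legitimate precisely because $J(\Phi)$ is a fixed data matrix; (4) encode cptp-ness of $\Xi$ as $J(\Xi)\succeq 0$ together with the affine equality $\Trp{E}{J(\Xi)}=\1_B$ (I should be careful about which partial trace / which tensor-factor ordering the paper's conventions dictate — in the paper's notation $J(\Xi)\in\mathcal S(E\otimes B)$ and $\Trp{E}{J(\Xi)}=\1_B$); (5) collect everything and verify that every constraint is either an LMI ($Z\succeq 0$, $J(\Xi)\succeq 0$, $\Trp{E}{Z}\preceq\mu\1$, $Z-J(\Phi^\setC)+J(\Xi\circ\Phi)\succeq 0$) or an affine equality, and the objective is linear — which is exactly the definition of an SDP. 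One should also note feasibility is immediate (any cptp $\Xi$ works, with $Z$ large enough), so the infimum is attained.

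The main obstacle — more a bookkeeping hurdle than a genuine difficulty — is getting the composition formula and the various partial-trace/transpose conventions exactly consistent with the ones fixed in the Preliminaries, so that $J(\Xi\circ\Phi)$ genuinely appears as an affine (indeed linear) function of the variable $J(\Xi)$ with the fixed operator $J(\Phi)$ as a coefficient. The substantive input, the SDP formulation of the diamond norm, is entirely standard and may be cited; once it is in place the remaining argument is just verifying that nothing in the combined program is nonlinear in the decision variables $(\mu, Z, J(\Xi))$. I would end by remarking that, as a by-product, the dual SDP gives a converging family of lower bounds on $\varepsilon_\Phi$ and hence certifies near-optimality of any candidate degrading map, and that the program has size polynomial in $|A|,|B|,|E|$, so it is efficiently solvable.
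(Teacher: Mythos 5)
Your proposal is correct and follows essentially the same route as the paper's proof: invoke Watrous's SDP characterization of the diamond norm, adjoin the constraints $J(\Xi)\succeq 0$ and $\Trp{E}{J(\Xi)}=\1_B$ encoding that $\Xi$ is cptp, and observe that $J(\Xi)\mapsto J(\Xi\circ\Phi)$ is linear for fixed $\Phi$ (you verify this via the link-product formula, the paper via transfer matrices --- an immaterial difference), so the combined program is an SDP. The only other deviations are bookkeeping: your explicit epigraph variable $\mu$ for the operator-norm objective and the factor $\tfrac12$ from a different normalization convention for the Watrous SDP, both of which you handle consistently.
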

\begin{proof}
Watrous proved \cite[Sec.\ 4]{watrous09} that for two channels $\Theta_1, \Theta_2 : \mathcal{S}(A) \to S(B)$ the diamond norm of their difference, i.e., $\norm{\Theta_1-\Theta_2}_{\diamond}$ can be expressed as a semidefinite program (SDP) of the form
\begin{align}  \label{eq_wat}
\begin{split}
\norm{\Theta_1-\Theta_2}_{\diamond} = &2 \inf\limits_{Z} \, \norm{\Trp{B}{Z}}_{\infty} \\
& \st \,  Z \geq J(\Theta_1-\Theta_2) \\
& \hspace{7.5mm} Z \geq 0 \, ,
\end{split}
\end{align}
where $J(\Theta_1 - \Theta_2)$ denotes the Choi-Jamio{\l}kowski representation of $\Theta_1 - \Theta_2$.\footnote{Note that $\norm{X}_{\infty}$ can be expressed as $\inf \{\mu \in \mathbb{R} : X \leq \mu \1 \}$.} Since the Choi-Jamio{\l}kowski representation is linear we obtain
\begin{align}
\begin{split}
\varepsilon_{\Phi}= &\inf\limits_{\Xi}  \,  \norm{\Phi^{\setC}-\Xi\circ \Phi}_{\diamond} \\
& \st \, \Xi:\mathcal{S}(\mathcal{H}_B) \to \mathcal{S}(\mathcal{H}_E) \textnormal{ is cptp}
\end{split}
\end{align}
By using~\eqref{eq_wat} this can be rewritten as
\begin{align}
\begin{split}\label{eq:sdp1}
\varepsilon_{\Phi} = &2 \inf\limits_{\Xi} \inf\limits_{Z} \, \norm{\Trp{E}{Z}}_{\infty} \\
& \st \, Z \geq J(\Phi^{\setC})-J(\Xi\circ \Phi) \\
& \hspace{7.5mm} Z \geq 0 \\
& \hspace{7.5mm} J(\Xi) \geq 0 \\
& \hspace{7.5mm}  \Trp{E}{J(\Xi)} = \mathds{1}_B \,  \\
\end{split}
\end{align}
where the  two final constraints in \eqref{eq:sdp1}, i.e., $J(\Xi) \geq 0$ and $\Trp{E}{J(\Xi)} = \mathds{1}_B$ ensure that $\Xi$ is completely positive and trace-preserving. Since two minimizations can be always interchanged we can reformulate the optimization problem such that we obtain
\begin{align}
\begin{split}
	\varepsilon_{\Phi}=&2 \inf\limits_{Z,J(\Xi)} 	 \,	 \norm{\Trp{E}{Z}}_{\infty} \\
			&\st \, Z \geq J(\Phi^{\setC})-J(\Xi\circ \Phi) \\
			&  \hspace{7.5mm} Z \geq 0\\
			&  \hspace{7.5mm} J(\Xi) \geq 0\\
			&  \hspace{7.5mm} \Trp{E}{J(\Xi)} = \mathds{1}_B \, .						
\end{split} \label{eq:SSDDPP} 
\end{align}
It is now easy to see that~\eqref{eq:SSDDPP} is a semidefinite program. Note that for any cptp map $\Phi: \mathcal{S}(A) \to \mathcal{S}(B)$ we can reshuffle the Choi-Jamio{\l}kowski operator $J(\Phi)$ to a \emph{transfer matrix} $T(\Phi)$ defined by the involution $\left \langle i j \right| T(\Phi) \left| k \ell \right \rangle = \left \langle i k \right| J(\Phi) \left| j \ell \right \rangle$. Concatenating channels can be reduced to multiplying transfer matrices, i.e., the transfer matrix corresponding to the channel $\Xi \circ \Phi$ can be written as $T(\Xi \circ \Phi) = T(\Xi)\, T(\Phi)$ \cite{wolf_book}. Expressing~\eqref{eq:SSDDPP} in terms of transfer matrices thus shows that it is a semidefinite program.

\end{proof}

Semidefinite programs (SDPs) can be solved efficiently, i.e., in time that is polynomial in the program description size \cite{Khachiyan80}. We note that nowadays there exist several different algorithms that in practice solve SDPs very efficiently.
A good overview can be found, e.g., in \cite{ref:BoyVan-04,schrijver_book}. Therefore, for an arbitrary channel $\Phi$ its parameter $\varepsilon_{\Phi}$ (given in \eqref{opt:epsi}) that defines how close it is to being degradable can be evaluated efficiently. 

The conceptual idea we used above to derive upper bounds on the quantum and the private classical capacity is that a channel that is close to being degradable should have a channel coherent and channel private information that is nearly additive. The same idea can be applied to approximate anti-degradable channels.

\begin{mydef} [$\varepsilon$-anti-degradable] \label{def:epsiAntiDeg}
A channel $\Phi:\mathcal{S}(A) \to \mathcal{S}(B)$ is said to be \emph{$\varepsilon$-anti-degradable} if there exists a channel $\Xi:\mathcal{S}(E) \to \mathcal{S}(B)$ such that $\norm{\Phi-\Xi \circ \Phi^{\setC}}_{\diamond} \leq \varepsilon$.
\end{mydef}
 
\begin{mythm}[Properties of $\varepsilon$-anti-degradable channels] \label{thm:antideg}
If $\Phi:\mathcal{S}(A) \to \mathcal{S}(B)$ is an $\varepsilon$-anti-degradable channel, then 
\begin{align*}
Q(\Phi) &\leq P(\Phi) \\
 &\leq \frac{\varepsilon}{2} \log (|B|-1) + \varepsilon \log |B| + \Hb\Bigl(\frac{\varepsilon}{2}\Bigr) + \Bigl(1+\frac{\varepsilon}{2}\Bigr) \Hb\Bigl(\frac{\varepsilon}{2+\varepsilon}\Bigr) \, .
\end{align*}
\end{mythm}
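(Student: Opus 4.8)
The substance of the statement is the upper bound $P(\Phi)\le\frac{\varepsilon}{2}\log(|B|-1)+\varepsilon\log|B|+h(\varepsilon/2)+(1+\frac{\varepsilon}{2})h(\frac{\varepsilon}{2+\varepsilon})$; the inequality $Q(\Phi)\le P(\Phi)$ is the standard fact that coherent quantum communication is private. My guiding observation is that $\Phi$ being $\varepsilon$-anti-degradable with map $\Xi:\mathcal{S}(E)\to\mathcal{S}(B)$ is exactly the statement that $\Phi^{\setC}$ is $\varepsilon$-degradable with degrading map $\Xi$ (the complementary channel of $\Phi^{\setC}$ being $\Phi$ itself). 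In the exact case $\varepsilon=0$ this already gives the result in one line: for any input ensemble $\{p_x,\rho_x\}$ with classical label $X$, the $B^n$-system of $\Phi^{\otimes n}$ is obtained from the $E^n$-system of $(\Phi^{\setC})^{\otimes n}$ by the channel $\Xi^{\otimes n}$, so data processing yields $P^{(1)}(\Phi^{\otimes n})=\max_{\{p_x,\rho_x\}}\bigl(I(X{:}B^n)-I(X{:}E^n)\bigr)\le 0$, hence $P(\Phi)=0$. The plan is to make this argument robust for $\varepsilon>0$, the only real work being the quantitative control of the continuity error.

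Concretely, I would fix $n$ and a near-optimal ensemble for $P^{(1)}(\Phi^{\otimes n})$, giving a state $\sigma^{XB^nE^n}$, and bound $I(X{:}B^n)_\sigma$ against $I(X{:}E^n)_\sigma$ by replacing the output systems $B_i$ one at a time by $\Xi(E_i)$, rather than applying data processing to $\Phi^{\otimes n}$ in one shot. One introduces the intermediate states $\mu_k$ produced by applying $\Xi\circ\Phi^{\setC}$ on the first $k$ inputs and $\Phi$ on the remaining $n-k$ inputs, so that $\mu_0=\sigma^{XB^n}$, $\mu_n=\Xi^{\otimes n}\bigl((\Phi^{\setC})^{\otimes n}(\cdot)\bigr)$, consecutive states obey $\norm{\mu_{k-1}-\mu_k}_{\trnorm}\le\varepsilon$ (only the $k$-th output factor changes, and $\norm{\Phi-\Xi\circ\Phi^{\setC}}_{\diamond}\le\varepsilon$), and $\mu_{k-1}$, $\mu_k$ agree on all systems except that single $|B|$-dimensional factor. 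Organising the change of $I(X{:}\cdot)$ at step $k$ as in Lemma~\ref{lem:help} --- one unconditional von Neumann entropy, bounded via the Fannes--Audenaert inequality (Lemma~\ref{lem:fannesAudenaert}) in dimension $|B|$, plus one conditional entropy, bounded via the strengthened Alicki--Fannes inequality (Lemma~\ref{lem:alickiFannes}) in dimension $|B|$ --- each step costs at most $\frac{\varepsilon}{2}\log(|B|-1)+h(\varepsilon/2)+\varepsilon\log|B|+(1+\frac{\varepsilon}{2})h(\frac{\varepsilon}{2+\varepsilon})$, i.e. precisely the error $\xi_2$ of Lemma~\ref{lem:help} with $|E|$ replaced by $|B|$. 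After the last step, $\mu_n$ is a processing of the $XE^n$-marginal of $\sigma$, so $I(X{:}B^n)_{\mu_n}\le I(X{:}E^n)_\sigma$ by data processing, and summing the $n$ steps gives $P^{(1)}(\Phi^{\otimes n})\le n\bigl(\frac{\varepsilon}{2}\log(|B|-1)+h(\varepsilon/2)+\varepsilon\log|B|+(1+\frac{\varepsilon}{2})h(\frac{\varepsilon}{2+\varepsilon})\bigr)$. Dividing by $n$ and letting $n\to\infty$ yields the bound on $P(\Phi)$, and $Q(\Phi)\le P(\Phi)$ completes the proof.

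The main obstacle --- and the point the proof is really about --- is this accumulation of continuity errors. A direct application of data processing to $\Phi^{\otimes n}$ versus $(\Xi\circ\Phi^{\setC})^{\otimes n}$ only controls $\norm{\Phi^{\otimes n}-(\Xi\circ\Phi^{\setC})^{\otimes n}}_{\diamond}\le n\varepsilon$ and then pays a Fannes/Alicki--Fannes penalty scaling with $\log|B|^n=n\log|B|$, producing a useless bound $\sim n^2\varepsilon\log|B|$ after the regularisation. The telescoping must therefore be arranged so that at every step the perturbed systems live inside a fixed $|B|$-dimensional factor, with all other marginals unchanged, so that Lemmas~\ref{lem:fannesAudenaert} and~\ref{lem:alickiFannes} may be invoked in dimension $|B|$ only and the total error stays linear in $n$; this is the anti-degradable analogue of the repeated use of Lemma~\ref{lem:help} in the proof of Theorem~\ref{thm:epsilonDegradableAdditivity}\eqref{it:iii}. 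A minor technical point is that the optimal ensemble need not consist of pure states, so one either argues directly with the possibly-mixed ensemble as above or first refines it as in the proof of \eqref{it:iii}; in contrast to the degradable case, the single-letter coherent-information term that survived there now drops out, because for an (approximately) anti-degradable channel the receiver's output is an (approximate) processing of the environment's.
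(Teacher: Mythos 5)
Your proof follows the paper's argument essentially verbatim: the paper likewise telescopes over the $n$ channel uses through hybrid states $\sigma^{(i)}$ in which the first $i$ uses of $\Phi$ are replaced by $\Xi\circ\Phi^{\setC}$, charges one Fannes--Audenaert plus one Alicki--Fannes term (the same $\xi$, in dimension $|B|$) per replacement, and concludes because the fully replaced state arises from an anti-degradable channel, whose private information is nonpositive. Your closing step --- the data-processing inequality $I(X{:}B^n)_{\mu_n}\le I(X{:}E^n)_\sigma$ instead of invoking the vanishing private capacity of anti-degradable channels --- is only a cosmetic variant of the paper's last line.
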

\begin{proof}
The inequality $Q(\Phi) \leq P(\Phi)$ is straightforward since full quantum communication is necessarily private \cite[Thm.~12.6.3]{wilde_book}. Consider a cq state $\phi_{X A'_1 \ldots A'_n}$ and for $i \in [n]$ let  $V^i_{A'_i \to B_i E_i}$ denote the isometric extension of the $i$-th channel $\Phi$ and $U^i_{A'_i \to B_i E_i}$ denote the isometric extension of the $i$-th channel $\Xi \circ \Phi^{\setC}$. Let $\rho_{X B_1 E_1 \ldots B_n E_n}:=(\bigotimes_{i=1}^n V^i) \phi (\bigotimes_{i=1}^n {V^i}^{\dagger})$ and $\sigma^{(i)}:=(U^1 \otimes \ldots \otimes U^i \otimes V^{i+1} \otimes \ldots \otimes V^n) \, \phi \,	({U^1}^\dagger \otimes \ldots \otimes {U^i}^\dagger \otimes {V^{i+1}}^\dagger \otimes \ldots \otimes {V^n}^\dagger)$. Suppose $\rho_{X B_1 E_1 \ldots B_n E_n}$ is the state that maximizes $P^{(1)}(\Phi^{\otimes n})$ and let $\xi:= \frac{\varepsilon}{2} \log (|B|-1) + \varepsilon \log(|B|) + \Hb(\varepsilon/2) + (1+\frac{\varepsilon}{2}) \Hb(\frac{\varepsilon}{2+\varepsilon})$, then
\begin{align}
P^{(1)}(\Phi^{\otimes n})
&\hspace{0mm}= \I{X}{B_1 \ldots B_n}_{\rho} - \I{X}{E_1 \ldots E_n}_{\rho} \nonumber \\
&\hspace{0mm}\leq \I{X}{B_1 \ldots B_n}_{\sigma^{(1)}} - \I{X}{E_1 \ldots E_n}_{\sigma^{(1)}} + \xi\nonumber\\
&\hspace{0mm}\leq \I{X}{B_1 \ldots B_n}_{\sigma^{(n)}} - \I{X}{E_1 \ldots E_n}_{\sigma^{(n)}} + n\xi \label{eq:seq}\\
&\hspace{0mm}\leq n\xi \, ,\nonumber
\end{align}
where the first inequality follows by the strengthened Alicki-Fannes inequality (see Lemma~\ref{lem:alickiFannes}) together with the Fannes-Audenaert inequality (see Lemma~\ref{lem:fannesAudenaert}) and the fact that $\norm{\rho - \sigma^{(1)}}_{\trnorm} \leq \varepsilon$ which follows by the assumption $\norm{\Phi-\Xi \circ \Phi^{\setC}}_{\diamond}$. The inequality \eqref{eq:seq} follows by applying the same argument in sequence for all $1\leq i \leq n$. Note that by assumption $\norm{\sigma^{(i)}-\sigma^{(i+1)}}_{\trnorm} \leq \varepsilon$ for all $i \in [n-1]$. The final inequality uses that by construction the state $\sigma^{(n)}$ is generated by sending $\phi$ through $n$ copies of an anti-degradable channel. Anti-degradable channels are known to have a private capacity that is zero \cite{smith08_3}.
\end{proof}

Similar as for $\varepsilon$-degradable channels, given a channel $\Phi:\mathcal{S}(A) \to \mathcal{S}(B)$ we can consider 
\begin{align} \label{opt:epsiAD}
\bar\varepsilon_{\Phi}:=\left \lbrace
\begin{array}{lll}
			&\inf\limits_{\Xi} 		& \norm{\Phi-\Xi\circ \Phi^{\setC}}_{\diamond} \\
			&\st					& \Xi:\mathcal{S}(B) \to \mathcal{S}(E) \textnormal{ is cptp} \, ,
	\end{array} \right.
\end{align}
which defines the smallest possible parameter $\varepsilon$ such that the channel $\Phi$ is $\varepsilon$-anti-degradable.
Since $(\Phi^\setC)^\setC = \Phi$ we have $\bar\varepsilon_{\Phi} = \varepsilon_{\Phi^\setC}$. Proposition~\ref{prop:SDP} thus implies that \eqref{opt:epsiAD} can also be phrased as an SDP.
Theorem~\ref{thm:antideg} of course is valid for $\bar \varepsilon_{\Phi}$.

\vspace{4mm}

\section{Upper bounds via convex decompositions of channels} \label{sec:convexUB}

In this section we show how to combine the concept of $\varepsilon$-degradable channels with a standard technique to derive upper bounds to the quantum capacity that is based on the idea of decomposing an arbitrary channel $\Phi$ into a convex sum of approximate degradable channels. 

It has been shown that for an arbitrary quantum channel $\Phi$ the mapping $\Phi \mapsto Q(\Phi)$ is convex if $\Phi$ is \mbox{(anti-)}degradable \cite{wolf07}. Therefore, if a channel $\Phi$ can be written as a convex combination of \mbox{(anti-)}degradable channels, i.e., $\Phi = \sum_{i=1}^n p_i \Xi_i$, where $p \in \Delta_n$ and $\{\Xi_i \}_{i=1}^n$ are \mbox{(anti-)}degradable, $Q(\Phi) \leq \sum_{i=1}^n p_i Q(\Xi_i) = \sum_{i=1}^n p_i Q^{(1)}(\Xi_i)$ which describes a single-letter upper bound to the quantum capacity of $\Phi$ that can be powerful as demonstrated in \cite{smith08,smith08_2}. A drawback of this technique is that it is channel specific, i.e., the convex decomposition into degradable channels has to be reconstructed from scratch for every different channel. In addition, for an arbitrary channel, it is unclear how to efficiently find a convex decomposition of degradable channels --- even worse it is highly questionable if this is even possible in general. The extreme points of the set of all qubit channels have been shown to be degradable or anti-degradable channels \cite{wolf07,cubitt08} and therefore for qubit channels a convex decomposition into \mbox{(anti-)}degradable channels does exist, even if it might be difficult to find. However, a characterization of the extreme points of quantum channels with an input dimension larger than two is unknown \cite{ruskai07} and as such there is no reason to believe that an arbitrary quantum channel can be written as a convex combination of \mbox{(anti-)}degradable channels.

Recall the definitions of the symmetric side-channel assisted quantum and private classical capacities~\cite{smith08_2,smith08_3}:
\begin{align}
  Q_{ss}(\Phi) &:= \sup_{\Theta} Q(\Phi\otimes\Theta) = \sup_{\Theta} Q^{(1)}(\Phi\otimes\Theta) \label{eq_Qss} \\
  P_{ss}(\Phi) &:= \sup_{\Theta} P(\Phi\otimes\Theta) = \sup_{\Theta} P^{(1)}(\Phi\otimes\Theta) \, ,
\end{align}
where $\Theta$ ranges over so-called symmetric channels, i.e.~those with $\Theta = \Theta^\setC$. By definition, $Q(\Phi) \leq Q_{ss}(\Phi)$ and $P(\Phi) \leq  P_{ss}(\Phi)$. 
The important insights of~\cite{smith08_2,smith08_3} were that both $Q_{ss}$ and $P_{ss}$ have single-letter formulas, respectively; both are convex in the channel; and both coincide with $Q^{(1)}$ for degradable channels. The following theorem extends this insight to approximate degradable channels.

\vspace{2mm}

\begin{mythm}
  \label{thm:Qss-vs-U}
  If $\Phi$ is an $\varepsilon$-degradable channel, with a degrading channel $\Xi$, then
\begin{enumerate}[(i)]
\item $Q_{ss}(\Phi) \leq U_{\Xi}(\Phi) + \varepsilon\log |E| + ( 1+\frac{\varepsilon}{2} ) \Hb(\frac{\varepsilon}{2+\varepsilon})$ , \label{itt_i}
\item $P_{ss}(\Phi) \leq U_{\Xi}(\Phi) + \varepsilon (2\log |E|+\frac{1}{2}\log|F| ) + \frac{5}{2} (1+\frac{\varepsilon}{2} ) \Hb(\frac{\varepsilon}{2+\varepsilon})$ . \label{itt_ii}
\end{enumerate}
\end{mythm}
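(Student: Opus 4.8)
The plan is to reduce both parts to estimates for the single channel $\Phi$ via the single‑letter formulas $Q_{ss}(\Phi)=\sup_{\Theta}Q^{(1)}(\Phi\otimes\Theta)$ and $P_{ss}(\Phi)=\sup_{\Theta}P^{(1)}(\Phi\otimes\Theta)$, the supremum running over symmetric channels $\Theta=\Theta^{\setC}$. The starting observation is that a symmetric channel is trivially degradable (the identity map degrades it, and an identity map has a trivial environment), so for every symmetric $\Theta$ the product $\Phi\otimes\Theta$ is again $\varepsilon$‑degradable with degrading map $\Xi\otimes\id$: indeed $(\Phi\otimes\Theta)^{\setC}=\Phi^{\setC}\otimes\Theta$ and $(\Xi\otimes\id)\circ(\Phi\otimes\Theta)=(\Xi\circ\Phi)\otimes\Theta$, so $\norm{(\Phi^{\setC}-\Xi\circ\Phi)\otimes\Theta}_{\diamond}\leq\norm{\Phi^{\setC}-\Xi\circ\Phi}_{\diamond}\leq\varepsilon$. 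The subtlety --- and the reason one cannot simply quote Theorem~\ref{thm:epsilonDegradableAdditivity} --- is that $\Phi\otimes\Theta$ has environment $E\otimes E_{\Theta}$, whose dimension is unbounded as $\Theta$ varies, while the degrading isometry only carries the environment $F$ (the $\Theta$‑side has a trivial $F_{\Theta}$). So the whole task is to re‑run the proofs of Theorem~\ref{thm:epsilonDegradableAdditivity} for $\Phi\otimes\Theta$ and check that $\log|E_{\Theta}|$ never enters the error terms.

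For part~\eqref{itt_i} I would argue as in Proposition~\ref{prop:Q1-vs-U} and claim~\eqref{it:i2} of Theorem~\ref{thm:epsilonDegradableAdditivity}. Fix a symmetric $\Theta$ and an input $\rho\in\mathcal{D}(A\otimes A_{\Theta})$, and form through the dilation isometries the state $\omega$ on $E\,E_{\Theta}\,\widetilde{E}\,\widetilde{E}_{\Theta}\,F$, where $\widetilde{E}_{\Theta}$ is the (isometric) copy of $B_{\Theta}$. Then $Q^{(1)}(\Phi\otimes\Theta)$ is the maximum over $\rho$ of $H(\widetilde{E}\widetilde{E}_{\Theta}F)_{\omega}-H(EE_{\Theta})_{\omega}=H(F|\widetilde{E}\widetilde{E}_{\Theta})_{\omega}+\bigl(H(\widetilde{E}\widetilde{E}_{\Theta})_{\omega}-H(EE_{\Theta})_{\omega}\bigr)$. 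Because $\Theta$ is symmetric, $\omega^{\widetilde{E}_{\Theta}}=\omega^{E_{\Theta}}$ (identifying $\widetilde{E}_{\Theta}\simeq B_{\Theta}$ with $E_{\Theta}$), so the bracket reduces to $H(\widetilde{E}|\widetilde{E}_{\Theta})_{\omega}-H(E|E_{\Theta})_{\omega}$; since $\norm{\omega^{\widetilde{E}\widetilde{E}_{\Theta}}-\omega^{EE_{\Theta}}}_{\trnorm}\leq\varepsilon$ and the non‑conditioned factor $\widetilde{E}\simeq E$ has dimension $|E|$, Lemma~\ref{lem:alickiFannes} bounds it by $\delta:=\varepsilon\log|E|+(1+\frac{\varepsilon}{2})\Hb(\frac{\varepsilon}{2+\varepsilon})$, with no dependence on $|E_{\Theta}|$. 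For the other term, strong subadditivity gives $H(F|\widetilde{E}\widetilde{E}_{\Theta})_{\omega}\leq H(F|\widetilde{E})_{\omega}$, and the reduced state on $F\widetilde{E}$ equals $\Trp{E}{(W\otimes\1)V\rho^{A}V^{\dagger}(W\otimes\1)^{\dagger}}$, i.e.\ an admissible state in the definition~\eqref{eq:U} of $U_{\Xi}(\Phi)$, whence $H(F|\widetilde{E})_{\omega}\leq U_{\Xi}(\Phi)$. Maximising over $\rho$ gives $Q^{(1)}(\Phi\otimes\Theta)\leq U_{\Xi}(\Phi)+\delta$, and taking the supremum over $\Theta$ proves~\eqref{itt_i}.

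For part~\eqref{itt_ii} I would imitate the proof of claim~\eqref{it:4} of Theorem~\ref{thm:epsilonDegradableAdditivity} for the channel $\Phi\otimes\Theta$; no regularisation is needed since $P_{ss}$ is already single‑letter. For an ensemble $\{p_{x},\rho_{x}^{AA_{\Theta}}\}$ I form the correlated state $\Omega$ on $XEE_{\Theta}\widetilde{E}\widetilde{E}_{\Theta}F$ as in~\eqref{eq:Omega} and decompose $\I{X}{BB_{\Theta}}_{\Omega}-\I{X}{EE_{\Theta}}_{\Omega}=\Ic{X}{F}{\widetilde{E}\widetilde{E}_{\Theta}}_{\Omega}+\bigl(\I{X}{\widetilde{E}\widetilde{E}_{\Theta}}_{\Omega}-\I{X}{EE_{\Theta}}_{\Omega}\bigr)$. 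Using $\omega^{\widetilde{E}_{\Theta}}=\omega^{E_{\Theta}}$ and $\omega^{\widetilde{E}_{\Theta}X}=\omega^{E_{\Theta}X}$ (symmetry of $\Theta$) together with Lemma~\ref{lem:alickiFannes}, the second bracket equals $\bigl(H(\widetilde{E}|\widetilde{E}_{\Theta})-H(E|E_{\Theta})\bigr)+\bigl(H(E|E_{\Theta}X)-H(\widetilde{E}|\widetilde{E}_{\Theta}X)\bigr)$ and is hence at most $2\delta$. For the conditional mutual information, concavity of the conditional entropy lets me take the $\rho_{x}^{AA_{\Theta}}$ pure; then $H(F|\widetilde{E}\widetilde{E}_{\Theta})\leq U_{\Xi}(\Phi)$ as above, while purity of $\Omega$ given $X$ gives $-H(F|\widetilde{E}\widetilde{E}_{\Theta}X)=H(F|EE_{\Theta}X)$, and the telescoping/Alicki‑Fannes step of claim~\eqref{it:4} --- with $\log|E_{\Theta}|$ again dropping out by symmetry of $\Theta$ --- bounds the latter by $\frac{1}{2}\bigl(\varepsilon\log|F|+(1+\frac{\varepsilon}{2})\Hb(\frac{\varepsilon}{2+\varepsilon})\bigr)$. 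Adding the three contributions yields $P^{(1)}(\Phi\otimes\Theta)\leq U_{\Xi}(\Phi)+\varepsilon\bigl(2\log|E|+\tfrac12\log|F|\bigr)+\tfrac{5}{2}(1+\tfrac{\varepsilon}{2})\Hb(\tfrac{\varepsilon}{2+\varepsilon})$, and the supremum over $\Theta$ gives~\eqref{itt_ii}; tracking these constants is routine once the structure is in place.

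The main obstacle is exactly the dimension bookkeeping just highlighted: a naive application of Lemma~\ref{lem:fannesAudenaert} or Lemma~\ref{lem:alickiFannes} to the composite channel would cost a factor $\log|E_{\Theta}|$, which is useless since $|E_{\Theta}|\to\infty$ under the supremum. The resolution --- the one genuinely new ingredient over Theorem~\ref{thm:epsilonDegradableAdditivity} --- is that $\Theta=\Theta^{\setC}$ forces the $E_{\Theta}$‑ and $\widetilde{E}_{\Theta}$‑marginals of every relevant state to coincide, even after conditioning on the classical register $X$; consequently $E_{\Theta}$ only ever appears as a \emph{conditioning} system inside the entropy differences that are estimated, and Lemma~\ref{lem:alickiFannes} charges solely the dimension of the \emph{non}‑conditioned system. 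A secondary, routine point is to verify that the pairs of states being compared are $\varepsilon$‑close in trace norm: on the $\Theta$‑subsystems this is immediate from symmetry (the states are in fact equal there), and on the $\Phi$‑subsystems it follows from $\norm{\Phi^{\setC}-\Xi\circ\Phi}_{\diamond}\leq\varepsilon$ after tensoring with $\Theta$.
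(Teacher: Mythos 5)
Your proposal is correct and follows essentially the same route as the paper: both reduce to the single-letter formulas for $Q_{ss}$ and $P_{ss}$, exploit the self-complementarity $\Theta=\Theta^{\setC}$ of the side channel so that its environment only ever appears as a \emph{conditioning} system (hence contributes no dimension factor in Lemma~\ref{lem:alickiFannes}), and then drop that conditioning by strong subadditivity to land on $U_{\Xi}(\Phi)$, with the same error bookkeeping $\delta=\varepsilon\log|E|+(1+\tfrac{\varepsilon}{2})\Hb(\tfrac{\varepsilon}{2+\varepsilon})$ and $\delta'=\varepsilon\log|F|+(1+\tfrac{\varepsilon}{2})\Hb(\tfrac{\varepsilon}{2+\varepsilon})$. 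The only cosmetic difference is that the paper uses the swap symmetry to condition on $E'$ throughout, whereas you condition on $\widetilde{E}_{\Theta}$ and $E_{\Theta}$ respectively and invoke only the equality of the $B'$- and $E'$-marginals (also after adjoining the classical register $X$), which is a perfectly valid, if anything slightly more economical, use of the hypothesis $\Theta=\Theta^{\setC}$.
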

We note that the statement~\eqref{itt_i} of Theorem~\ref{thm:Qss-vs-U} implies the statement~\eqref{it:i2} of Theorem~\ref{thm:epsilonDegradableAdditivity}.
\begin{proof}
Optimizing over symmetric side channels with generic 
isometry $U:A' \hookrightarrow B'\otimes E'$, and input states $\rho_{AA'}$
(or rather its purification), we have (using the same notation as in
the proof of Theorem~\ref{thm:epsilonDegradableAdditivity} otherwise for the approximate degrading channel): 
\begin{align}
  Q_{ss}(\Phi) &= \sup I(AA' \rangle BB') \, ,
\end{align}
where the supremum is over all input states $\rho_{AA'}$ and over all symmetric side channels (see~\eqref{eq_Qss}). With the same notation we then find
\begin{align}
  Q_{ss}(\Phi) 
                      &\hspace{0mm}= \sup H(BB') - H(EE') \\
                      &\hspace{0mm}= \sup H(B|E') - H(E|E') \\
                      &\hspace{0mm}\leq \sup H\bigl(F\widetilde{E}|E'\bigr) - H\bigl(\widetilde{E}|E'\bigr) + \varepsilon\log |E| + \left( 1+\frac{\varepsilon}{2} \right) h\!\left(\frac{\varepsilon}{2+\varepsilon}\right) \\
                      &\hspace{0mm}= \sup H\bigl(F|\widetilde{E}E'\bigr) + \varepsilon\log |E|  + \left( 1+\frac{\varepsilon}{2} \right) h\!\left(\frac{\varepsilon}{2+\varepsilon}\right) \\
                      &\hspace{0mm}= \sup H\bigl(F|\widetilde{E}\bigr) + \varepsilon\log |E| + \left( 1+\frac{\varepsilon}{2} \right) h\!\left(\frac{\varepsilon}{2+\varepsilon}\right).
\end{align}
Here, the first two lines are by definition, the third by the symmetry between $B'$ and
$E'$; in the fourth we use the Stinespring isometry $W$ of the approximate
degrading channel and Lemma~\ref{lem:alickiFannes}; in the fifth line we rewrite the
difference of conditional entropies using the chain rule, and in the last
line we have ``$\leq$'' by strong subadditivity, but equality is achieved with a
trivial symmetric side channel. But now, $\sup H\bigl(F|\widetilde{E}\bigr)$ contains
only the maximization over input states $\rho_A$, giving $U_{\Xi}(\Phi)$.

The proof for $P_{ss}$ is similar, cf.~Theorem~\ref{thm:epsilonDegradableAdditivity}.
\end{proof}

The significance of Theorem~\ref{thm:Qss-vs-U} is that $Q_{ss}$ is convex, unlike $Q$ (and likewise $P_{ss}$, in contrast to $P$). As a result, we even get strengthened upper bounds by taking the convex hull of the bound in Theorem~\ref{thm:epsilonDegradableAdditivity} and other upper bounds on $Q_{ss}$~\cite{smith08_2,ouyang14}. This is done in Section~\ref{sec:examples} to derive upper bounds for the capacity of a depolarizing and a BB84 channel.

\section{Applications} \label{sec:examples}
We now illustrate the power of the bounds derived in the previous sections on three examples. Recall that the upper bounds for the quantum and the private classical capacity derived in Sections~\ref{sec:main} and \ref{sec:convexUB} are valid for arbitrary channels $\Phi$. As the parameter $\varepsilon_{\Phi}$ given in \eqref{opt:epsi} is described via an SDP, it can be evaluated efficiently for every possible channel. Thus our upper bounds can be immediately applied and efficiently evaluated for arbitrary channels, whereas most previous upper bounds rely on channel specific constructions which can be different for each channel and are usually difficult to find \cite{smith08_2,smith08}. As we will see in this section, we can also combine different upper bounds, i.e., taking the best known upper bounds for every scenario.

\subsection{Depolarizing channel} \label{ex:depolarizingChannel}
Consider a depolarizing channel $\mathcal{D}_p: \mathcal{S}(A) \ni \rho \mapsto (1-p)\rho + \tfrac{p}{3}(X\rho X+Z\rho Z + Y \rho Y) \in \mathcal{S}(B)$ with $\dim A = \dim B =2$ and $p \in [0,1]$. Its channel coherent information is maximized on a Bell state as input, and hence evaluates to~\cite{bennett_mixed-state_1996,shorshort} (see also \cite[p.~575]{wilde_book})
\begin{equation}
Q^{(1)}(\mathcal{D}_p) = 1 + (1-p)\log(1-p) + p \log\left(\frac{p}{3} \right)\, . \label{eq:ciDepol}
\end{equation}
A well-known upper bound to $Q(\Phi)$ has been derived in~\cite[Cor.~7]{smith08} (see also~\cite[Thm.~5.5]{ouyang14}) and is given for $0 \leq p \leq \frac{1}{4}$ and $\gamma(p) := 4 (\sqrt{1-p} -1 +p)$ by
\begin{align}
&Q(\mathcal{D}_p) \leq \mathrm{conv}\Big \{ 1-\Hb(p),\Hb\left(\frac{1+\gamma(p)}{2}\Bigr) - \Hb\Bigl(\frac{\gamma(p)}{2} \right),  1-4p\Big \} \ . \label{eq_JapaneseUB}
\end{align}
Combining Theorem~\ref{thm:epsilonDegradableAdditivity} and Theorem~\ref{thm:Qss-vs-U} we obtain the upper bound
\begin{align} 
&Q(\mathcal{D}_p) \leq \mathrm{conv}\Big \{  U_{\Xi}(\mathcal{D}_p), 1-\Hb(p), \Hb\left(\frac{1+\gamma(p)}{2}\right) - \Hb\left(\frac{\gamma(p)}{2} \right), 1-4p\Big \} \ .\label{eq_newUB_moreT} 
\end{align}
We recall that $U_{\Xi}(\mathcal{D}_p)$ is given via a convex optimization problem. In order to further simplify our upper bound, using the structure of the depolarizing channel we further bound $U_{\Xi}(\mathcal{D}_p)$ with the help of Proposition~\ref{prop:Q1-vs-U}.
\begin{align} 
&Q(\mathcal{D}_p) \leq \mathrm{conv}\Big \{ Q^{(1)}(\Phi) +  \frac{\varepsilon_{p}}{2}  \log(|E|-1) +  \Hb\left(\frac{\varepsilon_{p}}{2}\right)   +\varepsilon_{p} \log |E|  + \left(1+\frac{\varepsilon_{p}}{2} \right) \Hb\left(\frac{\varepsilon_{p}}{2+\varepsilon_{p}} \right),  \nonumber \\
&\hspace{30mm} 1-\Hb(p),\Hb\left(\frac{1+\gamma(p)}{2}\right) - \Hb\left(\frac{\gamma(p)}{2} \right), 1-4p\Big \} \ .\label{eq_newUB}
\end{align}
Figure~\ref{fig:depol} compares the new upper bound given by~\eqref{eq_newUB}, for $\varepsilon_{p}$ as in \eqref{opt:epsi}, with the upper bound given by~\eqref{eq_JapaneseUB}. We note that the upper bound~\eqref{eq_newUB_moreT} can be potentially considerably better than~\eqref{eq_newUB}, however one has to solve a convex optimization problem that defines $U_{\Xi}(\mathcal{D}_p)$.\footnote{Very recently after this paper, Leditzky \emph{et al.}~presented a new upper bound for the depolarizing channel~\cite{letditzky17} that outperforms~\eqref{eq_newUB} in the high noise regime. See~\cite{letditzky17} for a comparison of the new bound with~\eqref{eq_newUB}.}

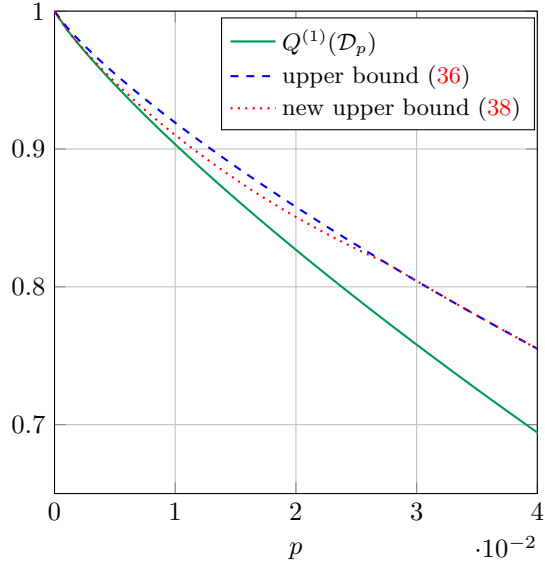
\begin{figure}[!htb]
\centering

  \begin{tikzpicture}
	\begin{axis}[
		height=8.0cm,
		width=8.0cm,
		grid=major,
		xlabel=$p$,
		xmin=0,
		xmax=0.04,
		ymax=1.0,
		ymin=0.65,
	     xtick={0,0.01,0.02,0.03,0.04},
          ytick={1,0.9,0.8,0.7,0.6},
		legend style={at={(0.67,0.987)},anchor=north,legend cell align=left,font=\footnotesize} 
	]



	\addplot[ForestGreen,thick,smooth] coordinates {
(0.,1) (0.001,0.987007) (0.002,0.976016) (0.003,0.965781) (0.004,0.956038) (0.005,0.94666) (0.006,0.937575) (0.007,0.928733) (0.008,0.920099) (0.009,0.911647) (0.01,0.903357) (0.011,0.895213) (0.012,0.887203) (0.013,0.879313) (0.014,0.871537) (0.015,0.863865) (0.016,0.856291) (0.017,0.848808) (0.018,0.841412) (0.019,0.834097) (0.02,0.82686) (0.021,0.819697) (0.022,0.812603) (0.023,0.805577) (0.024,0.798615) (0.025,0.791715) (0.026,0.784874) (0.027,0.77809) (0.028,0.77136) (0.029,0.764684) (0.03,0.758059) (0.031,0.751484) (0.032,0.744957) (0.033,0.738476) (0.034,0.73204) (0.035,0.725649) (0.036,0.7193) (0.037,0.712992) (0.038,0.706726) (0.039,0.700498) (0.04,0.694309) (0.041,0.688158) (0.042,0.682043) (0.043,0.675965) (0.044,0.669921) (0.045,0.663912) (0.046,0.657936) (0.047,0.651993) (0.048,0.646082) (0.049,0.640203) (0.05,0.634355) (0.051,0.628537) (0.052,0.622749) (0.053,0.61699) (0.054,0.61126) (0.055,0.605559) (0.056,0.599885) (0.057,0.594238) (0.058,0.588618) (0.059,0.583025) (0.06,0.577457) (0.061,0.571915) (0.062,0.566399) (0.063,0.560907) (0.064,0.555439) (0.065,0.549996) (0.066,0.544577) (0.067,0.53918) (0.068,0.533807) (0.069,0.528457) (0.07,0.523129) (0.071,0.517823) (0.072,0.512539) (0.073,0.507277) (0.074,0.502036) (0.075,0.496816) (0.076,0.491617) (0.077,0.486439) (0.078,0.48128) (0.079,0.476142) (0.08,0.471024) (0.081,0.465925) (0.082,0.460846) (0.083,0.455785) (0.084,0.450744) (0.085,0.445722) (0.086,0.440718) (0.087,0.435732) (0.088,0.430765) (0.089,0.425815) (0.09,0.420884) (0.091,0.41597) (0.092,0.411073) (0.093,0.406194) (0.094,0.401331) (0.095,0.396486) (0.096,0.391657) (0.097,0.386846) (0.098,0.38205) (0.099,0.377271) (0.1,0.372508)
	};
	\addlegendentry{$Q^{(1)}(\mathcal{D}_p)$}

%
		
		\addplot[dashed,blue,thick,smooth] coordinates {
(0.,1) (0.001,0.988592) (0.002,0.979183) (0.003,0.970529) (0.004,0.962363) (0.005,0.954561) (0.006,0.947048) (0.007,0.939775) (0.008,0.932706) (0.009,0.925817) (0.01,0.919086) (0.011,0.912498) (0.012,0.90604) (0.013,0.899699) (0.014,0.893468) (0.015,0.887337) (0.016,0.881301) (0.017,0.875352) (0.018,0.869486) (0.019,0.863698) (0.02,0.857984) (0.021,0.85234) (0.022,0.846761) (0.023,0.841246) (0.024,0.835792) (0.025,0.830395) (0.026,0.825053) (0.027,0.819764) (0.028,0.814526) (0.029,0.809338) (0.03,0.804196) (0.031,0.7991) (0.032,0.794048) (0.033,0.789038) (0.034,0.78407) (0.035,0.779141) (0.036,0.774251) (0.037,0.769399) (0.038,0.764582) (0.039,0.759802) (0.04,0.755055) (0.041,0.750342) (0.042,0.745661) (0.043,0.741012) (0.044,0.736394) (0.045,0.731806) (0.046,0.727247) (0.047,0.722717) (0.048,0.718215) (0.049,0.71374) (0.05,0.709292) (0.051,0.70487) (0.052,0.700473) (0.053,0.696102) (0.054,0.691754) (0.055,0.687431) (0.056,0.683132) (0.057,0.678855) (0.058,0.674601) (0.059,0.670369) (0.06,0.666158) (0.061,0.661969) (0.062,0.6578) (0.063,0.653653) (0.064,0.649525) (0.065,0.645417) (0.066,0.641328) (0.067,0.637258) (0.068,0.633207) (0.069,0.629174) (0.07,0.62516) (0.071,0.621163) (0.072,0.617184) (0.073,0.613222) (0.074,0.609276) (0.075,0.605348) (0.076,0.601435) (0.077,0.597539) (0.078,0.593659) (0.079,0.589795) (0.08,0.585945) (0.081,0.582112) (0.082,0.578292) (0.083,0.574488) (0.084,0.570698) (0.085,0.566923) (0.086,0.563161) (0.087,0.559414) (0.088,0.55568) (0.089,0.55196) (0.09,0.548253) (0.091,0.544559) (0.092,0.540878) (0.093,0.53721) (0.094,0.533555) (0.095,0.529912) (0.096,0.526281) (0.097,0.522663) (0.098,0.519056) (0.099,0.515461) (0.1,0.511878)			
	};
		\addlegendentry{upper bound~\eqref{eq_JapaneseUB}}


		\addplot[dotted,red,thick,smooth] coordinates {
    (0.0000,    1.0000)
    (0.0010,    0.9870)
    (0.0020,   0.9763 )
    (0.0030,    0.9665)
    (0.0040,   0.9573 )
    (0.0050,   0.9485 )
    (0.0060,    0.9402)
    (0.0070,    0.9322)
    (0.0080,    0.9245)
    (0.0090,    0.9172)
    (0.0100,    0.9101)
    (0.0110,    0.9032)
    (0.0120,   0.8966 )
    (0.0130,    0.8902)
    (0.0140,    0.8840)
    (0.0150,   0.8780 )
    (0.0160,   0.8721 )
    (0.0170,    0.8665)
    (0.0180,   0.8611 )
    (0.0190,   0.8558 )
    (0.0200,   0.8507 )
    (0.0210,   0.8458 )
    (0.0220,    0.8410)
    (0.0230,    0.8363)
    (0.0240,    0.8319)
    (0.0250,    0.8275)
    (0.0260,    0.8233)
    (0.0270,    0.8193)
    (0.0280,    0.8145)
    (0.0290,    0.8093)
    (0.0300,    0.8042)
    (0.0310,    0.7991)
    (0.0320,   0.7940 )
    (0.0330,    0.7890)
    (0.0340,   0.7841 )
    (0.0350,    0.7791)
    (0.0360,    0.7743)
    (0.0370,    0.7694)
    (0.0380,   0.7646 )
    (0.0390,    0.7598)
    (0.0400,   0.7551 )
	};
		\addlegendentry{new upper bound~\eqref{eq_newUB}}


	\end{axis}  

\end{tikzpicture}
\caption{This plot depicts upper and lower bounds for the quantum capacity $Q(\mathcal{D}_p)$ of a qubit depolarizing channel $\mathcal{D}_p$. The channel coherent information given in \eqref{eq:ciDepol} denotes a lower bound on $Q(\mathcal{D}_p)$ (cf.\ solid green curve). The dashed blue curve denotes an upper bound on $Q(\mathcal{D}_p)$ given in~\eqref{eq_JapaneseUB}. The dotted red line depicts the upper bound given by~\eqref{eq_newUB} with $\varepsilon_{p}$ as given in \eqref{opt:epsi}.}
\label{fig:depol}
\end{figure}

It can be shown that for two copies of an arbitrary quantum channel $\Phi$ we have $\varepsilon_{\Phi \otimes \Phi} \leq 2 \varepsilon_{\Phi}$. To see this, we assume without loss of generality that $\Xi$ denotes the optimizer in~\eqref{opt:epsi}. Since $\Xi \otimes \Xi$ is a feasible degrading channel, we obtain
\begin{align*}
\varepsilon_{\Phi \otimes \Phi} &\leq \norm{ \Phi^\setC \otimes \Phi^\setC - (\Xi \circ \Phi)\otimes(\Xi \circ \Phi)}_{\diamond} \nonumber \\
&= \norm{(\Phi^\setC - \Xi \circ \Phi)\otimes \Phi^\setC - (\Xi \circ \Phi)\otimes(\Xi \circ \Phi - \Phi^\setC)}_{\diamond} \nonumber \\
&\leq \norm{\Phi^\setC - \Xi \circ \Phi}_{\diamond} \norm{\Phi^\setC}_{\diamond} + \norm{\Xi \circ \Phi}_{\diamond} \norm{\Phi^\setC - \Xi \circ \Phi}_{\diamond} \nonumber \\
&\leq 2 \varepsilon_{\Phi} \ ,
\end{align*}
where we used that the diamond norm is multiplicative under the tensor product.
It could happen that $\varepsilon_{\Phi \otimes \Phi}$ is considerably smaller than $2 \varepsilon_{\Phi}$, i.e., the tensor channel $\Phi \otimes \Phi$ would be noticeably closer to being degradable than $\Phi$. Numerics for the depolarizing channel however show that this is not the case (we observe that $\varepsilon_{\Phi \otimes \Phi}$ is always close to $2 \varepsilon_{\Phi}$).

\subsection{BB84 channel}
Consider a qubit Pauli channel with independent bit flip and phase flip error probability where $p_X \in [0,\tfrac{1}{2}]$ denotes the bit flip and $p_Z\in [0,\tfrac{1}{2}]$ the phase flip probability. Due to its relevance for the BB84 protocol this channel is often called BB84 channel. More formally this is a channel $\mathcal{B}_{p_X,p_Z}:\mathcal{S}(A) \ni \rho \mapsto (1-p_X-p_Z+p_X p_Z)\rho +(p_X - p_X p_Z )X \rho X + (p_Z-p_Z p_X)Z \rho Z + p_X p_Z Y \rho Y \in \mathcal{S}(B)$ with $\dim A = \dim B=2$. It is immediate to verify that a Bell state maximizes the coherent information and therefore the channel coherent information of the BB84 channel is given by $Q^{(1)}(\mathcal{B}_{p_X,p_Z})=1-\Hb(p_X)-\Hb(p_Z)$. 
For the case where $p_X=p_Z=:p$ it has been shown that \cite{smith08}
\begin{equation}
Q(\mathcal{B}_{p,p}) \leq \Hb \Bigl( \frac{1}{2} - 2 p(1-p) \Bigr) - \Hb\bigl(2 p(1-p) \bigr)\, . \label{eq:UBsmith}
\end{equation}
Combining Theorem~\ref{thm:epsilonDegradableAdditivity} and Theorem~\ref{thm:Qss-vs-U} we obtain the upper bound
\begin{align} 
&Q(\mathcal{B}_{p,p}) \leq \mathrm{conv}\Big \{ Q^{(1)}(\mathcal{B}_{p,p}) +  \frac{\varepsilon_{p}}{2}  \log(|E|-1) +  \Hb\left(\frac{\varepsilon_{p}}{2}\right) +\varepsilon_{p} \log |E| + \left(1+\frac{\varepsilon_{p}}{2} \right) \Hb\left(\frac{\varepsilon_{p}}{2+\varepsilon_{p}} \right), \nonumber \\
&\hspace{30mm} \Hb \left( \frac{1}{2} - 2 p(1-p) \right) - \Hb\bigl(2 p(1-p) \bigr)\Big \} \ ,\label{eq_newUB_BB84}
\end{align}
which is strictly better than~\eqref{eq:UBsmith}.
Figure~\ref{fig:bb84} compares the upper bound of the quantum capacity derived in Theorem~\ref{thm:epsilonDegradableAdditivity}, for $\varepsilon_{p}$ as in \eqref{opt:epsi}, with previously known upper bounds. In the high-noise regime~\eqref{eq:UBsmith} outperforms~\eqref{eq_newUB_BB84}. In most quantum key distribution protocols the secret-key rate is given by the difference of a min-entropy term and a term that comes from the error correction step (which corresponds to the quantum capacity). Oftentimes these two terms are of the same magnitude and therefore the improvement of~\eqref{eq_newUB_BB84} compared to~\eqref{eq:UBsmith} which looks small on Figure~\ref{fig:bb84} can be very relevant for bounding the secret-key rate.     
We note that the upper bound~\eqref{eq_newUB_BB84} can be slightly improved by using the $U_{\Xi}(\mathcal{B}_{p,p})$ quantity.
\begin{figure}[!htb]
\centering
\hspace{-7mm}
    \subfloat[$p_X=p_Z=p$]{ 
  \begin{tikzpicture}
	\begin{axis}[
		height=8cm,
		width=8cm,
		grid=major,
		xlabel=$p$,
		xmin=0,
		xmax=0.01,
		ymax=1.0,
		ymin=0.8,
	     xtick={0,0.0025,0.005,0.0075,0.01},
          ytick={1,0.95,0.9,0.85,0.8},
		legend style={at={(0.72,0.99)},anchor=north,legend cell align=left,font=\footnotesize} 
	]



	\addplot[ForestGreen,thick,smooth] coordinates {
(0.,1) (0.0001,0.997054) (0.0002,0.994508) (0.0003,0.992113) (0.0004,0.989816) (0.0005,0.987592) (0.0006,0.985426) (0.0007,0.983308) (0.0008,0.981232) (0.0009,0.979192) (0.001,0.977184) (0.0011,0.975206) (0.0012,0.973253) (0.0013,0.971325) (0.0014,0.969418) (0.0015,0.967533) (0.0016,0.965666) (0.0017,0.963818) (0.0018,0.961987) (0.0019,0.960172) (0.002,0.958372) (0.0021,0.956586) (0.0022,0.954815) (0.0023,0.953056) (0.0024,0.95131) (0.0025,0.949576) (0.0026,0.947854) (0.0027,0.946143) (0.0028,0.944442) (0.0029,0.942752) (0.003,0.941072) (0.0031,0.939401) (0.0032,0.93774) (0.0033,0.936088) (0.0034,0.934445) (0.0035,0.93281) (0.0036,0.931183) (0.0037,0.929565) (0.0038,0.927954) (0.0039,0.926351) (0.004,0.924755) (0.0041,0.923167) (0.0042,0.921586) (0.0043,0.920011) (0.0044,0.918443) (0.0045,0.916882) (0.0046,0.915328) (0.0047,0.913779) (0.0048,0.912237) (0.0049,0.910701) (0.005,0.909171) (0.0051,0.907646) (0.0052,0.906127) (0.0053,0.904614) (0.0054,0.903107) (0.0055,0.901604) (0.0056,0.900107) (0.0057,0.898615) (0.0058,0.897128) (0.0059,0.895647) (0.006,0.89417) (0.0061,0.892698) (0.0062,0.891231) (0.0063,0.889768) (0.0064,0.88831) (0.0065,0.886857) (0.0066,0.885408) (0.0067,0.883963) (0.0068,0.882523) (0.0069,0.881087) (0.007,0.879655) (0.0071,0.878228) (0.0072,0.876804) (0.0073,0.875385) (0.0074,0.873969) (0.0075,0.872558) (0.0076,0.87115) (0.0077,0.869746) (0.0078,0.868346) (0.0079,0.86695) (0.008,0.865557) (0.0081,0.864168) (0.0082,0.862782) (0.0083,0.861401) (0.0084,0.860022) (0.0085,0.858647) (0.0086,0.857276) (0.0087,0.855908) (0.0088,0.854543) (0.0089,0.853181) (0.009,0.851823) (0.0091,0.850468) (0.0092,0.849117) (0.0093,0.847768) (0.0094,0.846422) (0.0095,0.84508) (0.0096,0.843741) (0.0097,0.842405) (0.0098,0.841071) (0.0099,0.839741) (0.01,0.838414) (0.0101,0.837089) (0.0102,0.835768) (0.0103,0.834449) (0.0104,0.833133) (0.0105,0.83182) (0.0106,0.83051) (0.0107,0.829202) (0.0108,0.827898) (0.0109,0.826596) (0.011,0.825296) (0.0111,0.823999) (0.0112,0.822705) (0.0113,0.821414) (0.0114,0.820125) (0.0115,0.818838) (0.0116,0.817555) (0.0117,0.816273) (0.0118,0.814994) (0.0119,0.813718) (0.012,0.812444) (0.0121,0.811173) (0.0122,0.809904) (0.0123,0.808637) (0.0124,0.807373) (0.0125,0.806111) (0.0126,0.804851) (0.0127,0.803594) (0.0128,0.802339) (0.0129,0.801086) (0.013,0.799836) (0.0131,0.798588) (0.0132,0.797342) (0.0133,0.796098) (0.0134,0.794856) (0.0135,0.793617) (0.0136,0.79238) (0.0137,0.791145) (0.0138,0.789912) (0.0139,0.788681) (0.014,0.787453) (0.0141,0.786226) (0.0142,0.785002) (0.0143,0.783779) (0.0144,0.782559) (0.0145,0.78134) (0.0146,0.780124) (0.0147,0.77891) (0.0148,0.777697) (0.0149,0.776487) (0.015,0.775279) (0.0151,0.774072) (0.0152,0.772868) (0.0153,0.771665) (0.0154,0.770464) (0.0155,0.769266) (0.0156,0.768069) (0.0157,0.766874) (0.0158,0.765681) (0.0159,0.764489) (0.016,0.7633) (0.0161,0.762112) (0.0162,0.760927) (0.0163,0.759743) (0.0164,0.758561) (0.0165,0.75738) (0.0166,0.756202) (0.0167,0.755025) (0.0168,0.75385) (0.0169,0.752676) (0.017,0.751505) (0.0171,0.750335) (0.0172,0.749167) (0.0173,0.748) (0.0174,0.746836) (0.0175,0.745673) (0.0176,0.744511) (0.0177,0.743351) (0.0178,0.742193) (0.0179,0.741037) (0.018,0.739882) (0.0181,0.738729) (0.0182,0.737578) (0.0183,0.736428) (0.0184,0.73528) (0.0185,0.734133) (0.0186,0.732988) (0.0187,0.731844) (0.0188,0.730702) (0.0189,0.729562) (0.019,0.728423) (0.0191,0.727286) (0.0192,0.72615) (0.0193,0.725016) (0.0194,0.723883) (0.0195,0.722752) (0.0196,0.721623) (0.0197,0.720494) (0.0198,0.719368) (0.0199,0.718243) (0.02,0.717119)
	};
	\addlegendentry{$Q^{(1)}(\mathcal{B}_{p,p})$}

		\addplot[dashed,blue,thick,smooth] coordinates {
(0.,1) (0.0001,0.997254) (0.0002,0.994908) (0.0003,0.992714) (0.0004,0.990618) (0.0005,0.988594) (0.0006,0.986629) (0.0007,0.984713) (0.0008,0.982838) (0.0009,0.980999) (0.001,0.979192) (0.0011,0.977415) (0.0012,0.975664) (0.0013,0.973937) (0.0014,0.972232) (0.0015,0.970548) (0.0016,0.968883) (0.0017,0.967236) (0.0018,0.965607) (0.0019,0.963994) (0.002,0.962395) (0.0021,0.960812) (0.0022,0.959242) (0.0023,0.957685) (0.0024,0.956141) (0.0025,0.954609) (0.0026,0.953089) (0.0027,0.951579) (0.0028,0.950081) (0.0029,0.948593) (0.003,0.947114) (0.0031,0.945646) (0.0032,0.944187) (0.0033,0.942737) (0.0034,0.941295) (0.0035,0.939862) (0.0036,0.938438) (0.0037,0.937021) (0.0038,0.935613) (0.0039,0.934212) (0.004,0.932818) (0.0041,0.931431) (0.0042,0.930052) (0.0043,0.92868) (0.0044,0.927314) (0.0045,0.925955) (0.0046,0.924602) (0.0047,0.923256) (0.0048,0.921915) (0.0049,0.920581) (0.005,0.919253) (0.0051,0.91793) (0.0052,0.916613) (0.0053,0.915302) (0.0054,0.913996) (0.0055,0.912696) (0.0056,0.9114) (0.0057,0.91011) (0.0058,0.908825) (0.0059,0.907545) (0.006,0.90627) (0.0061,0.905) (0.0062,0.903734) (0.0063,0.902473) (0.0064,0.901217) (0.0065,0.899965) (0.0066,0.898717) (0.0067,0.897474) (0.0068,0.896236) (0.0069,0.895001) (0.007,0.893771) (0.0071,0.892544) (0.0072,0.891322) (0.0073,0.890104) (0.0074,0.88889) (0.0075,0.887679) (0.0076,0.886473) (0.0077,0.88527) (0.0078,0.884071) (0.0079,0.882876) (0.008,0.881684) (0.0081,0.880496) (0.0082,0.879312) (0.0083,0.878131) (0.0084,0.876953) (0.0085,0.875779) (0.0086,0.874608) (0.0087,0.873441) (0.0088,0.872277) (0.0089,0.871116) (0.009,0.869959) (0.0091,0.868804) (0.0092,0.867653) (0.0093,0.866505) (0.0094,0.86536) (0.0095,0.864218) (0.0096,0.863079) (0.0097,0.861943) (0.0098,0.86081) (0.0099,0.85968) (0.01,0.858552) (0.0101,0.857428) (0.0102,0.856307) (0.0103,0.855188) (0.0104,0.854072) (0.0105,0.852959) (0.0106,0.851848) (0.0107,0.85074) (0.0108,0.849635) (0.0109,0.848533) (0.011,0.847433) (0.0111,0.846336) (0.0112,0.845241) (0.0113,0.844149) (0.0114,0.843059) (0.0115,0.841972) (0.0116,0.840888) (0.0117,0.839806) (0.0118,0.838726) (0.0119,0.837649) (0.012,0.836574) (0.0121,0.835501) (0.0122,0.834431) (0.0123,0.833363) (0.0124,0.832298) (0.0125,0.831234) (0.0126,0.830173) (0.0127,0.829115) (0.0128,0.828058) (0.0129,0.827004) (0.013,0.825952) (0.0131,0.824902) (0.0132,0.823854) (0.0133,0.822809) (0.0134,0.821766) (0.0135,0.820724) (0.0136,0.819685) (0.0137,0.818648) (0.0138,0.817613) (0.0139,0.81658) (0.014,0.815549) (0.0141,0.81452) (0.0142,0.813493) (0.0143,0.812469) (0.0144,0.811446) (0.0145,0.810425) (0.0146,0.809406) (0.0147,0.808389) (0.0148,0.807373) (0.0149,0.80636) (0.015,0.805349) (0.0151,0.804339) (0.0152,0.803332) (0.0153,0.802326) (0.0154,0.801322) (0.0155,0.80032) (0.0156,0.79932) (0.0157,0.798322) (0.0158,0.797325) (0.0159,0.79633) (0.016,0.795337) (0.0161,0.794346) (0.0162,0.793356) (0.0163,0.792369) (0.0164,0.791382) (0.0165,0.790398) (0.0166,0.789415) (0.0167,0.788434) (0.0168,0.787455) (0.0169,0.786478) (0.017,0.785502) (0.0171,0.784527) (0.0172,0.783555) (0.0173,0.782584) (0.0174,0.781614) (0.0175,0.780647) (0.0176,0.77968) (0.0177,0.778716) (0.0178,0.777753) (0.0179,0.776791) (0.018,0.775831) (0.0181,0.774873) (0.0182,0.773916) (0.0183,0.772961) (0.0184,0.772008) (0.0185,0.771055) (0.0186,0.770105) (0.0187,0.769156) (0.0188,0.768208) (0.0189,0.767262) (0.019,0.766317) (0.0191,0.765374) (0.0192,0.764432) (0.0193,0.763492) (0.0194,0.762553) (0.0195,0.761615) (0.0196,0.760679) (0.0197,0.759745) (0.0198,0.758812) (0.0199,0.75788) (0.02,0.75695)
	};
		\addlegendentry{UB \eqref{eq:UBsmith} \cite{smith08}}	
		
		\addplot[dotted,red,thick,smooth] coordinates {

(0,1)
(0.0001,0.99705)
(0.0002,0.99451)
(0.0003,0.99211)
(0.0004,0.98982)
(0.0005,0.98759)
(0.0006,0.98543)
(0.0007,0.98331)
(0.0008,0.98126)
(0.0009,0.97947)
(0.001,0.97752)
(0.0011,0.97561)
(0.0012,0.97373)
(0.0013,0.97188)
(0.0014,0.97005)
(0.0015,0.96825)
(0.0016,0.96648)
(0.0017,0.96473)
(0.0018,0.963)
(0.0019,0.96129)
(0.002,0.9596)
(0.0021,0.95793)
(0.0022,0.95628)
(0.0023,0.95465)
(0.0024,0.95304)
(0.0025,0.95144)
(0.0026,0.94985)
(0.0027,0.94829)
(0.0028,0.94674)
(0.0029,0.9452)
(0.003,0.94368)
(0.0031,0.94217)
(0.0032,0.94068)
(0.0033,0.93919)
(0.0034,0.93773)
(0.0035,0.93627)
(0.0036,0.93483)
(0.0037,0.9334)
(0.0038,0.93198)
(0.0039,0.93057)
(0.004,0.92918)
(0.0041,0.9278)
(0.0042,0.92643)
(0.0043,0.92507)
(0.0044,0.92372)
(0.0045,0.92238)
(0.0046,0.92105)
(0.0047,0.91973)
(0.0048,0.91842)
(0.0049,0.91713)
(0.005,0.91584)
(0.0051,0.91456)
(0.0052,0.91329)
(0.0053,0.91203)
(0.0054,0.91078)
(0.0055,0.90954)
(0.0056,0.90831)
(0.0057,0.90709)
(0.0058,0.90588)
(0.0059,0.90468)
(0.006,0.90348)
(0.0061,0.9023)
(0.0062,0.90112)
(0.0063,0.89995)
(0.0064,0.89879)
(0.0065,0.89764)
(0.0066,0.89649)
(0.0067,0.89536)
(0.0068,0.89423)
(0.0069,0.89311)
(0.007,0.892)
(0.0071,0.89089)
(0.0072,0.8898)
(0.0073,0.88871)
(0.0074,0.88763)
(0.0075,0.88656)
(0.0076,0.88549)
(0.0077,0.88443)
(0.0078,0.88338)
(0.0079,0.88234)
(0.008,0.8813)
(0.0081,0.88028)
(0.0082,0.87925)
(0.0083,0.87813)
(0.0084,0.87695)
(0.0085,0.87578)
(0.0086,0.87461)
(0.0087,0.87344)
(0.0088,0.87228)
(0.0089,0.87112)
(0.009,0.86996)
(0.0091,0.8688)
(0.0092,0.86765)
(0.0093,0.8665)
(0.0094,0.86536)
(0.0095,0.86422)
(0.0096,0.86308)
(0.0097,0.86194)
(0.0098,0.86081)
(0.0099,0.85968)
(0.01,0.85855)
	};
		\addlegendentry{new UB~\eqref{eq_newUB_BB84}}			
		

	\end{axis}  

\end{tikzpicture} }  \qquad \qquad
    \subfloat[$p_Z=100 p_X$]{
  \begin{tikzpicture}
	\begin{axis}[
		height=8cm,
		width=8cm,
		grid=major,
		xlabel=$p_X$,
		xmin=0,
		xmax=0.001,
		ymax=1.0,
		ymin=0.5,
	     xtick={0,0.00025,0.0005,0.00075,0.001},
          ytick={1,0.9,0.8,0.7,0.6,0.5},
		legend style={at={(0.627,0.99)},anchor=north,legend cell align=left,font=\footnotesize} 
	]



	\addplot[ForestGreen,thick,smooth] coordinates {
(0.,1) (0.00001,0.988412) (0.00002,0.978845) (0.00003,0.970042) (0.00004,0.961736) (0.00005,0.953799) (0.00006,0.946157) (0.00007,0.93876) (0.00008,0.931574) (0.00009,0.924572) (0.0001,0.917734) (0.00011,0.911043) (0.00012,0.904486) (0.00013,0.898052) (0.00014,0.891732) (0.00015,0.885517) (0.00016,0.879402) (0.00017,0.873378) (0.00018,0.867442) (0.00019,0.861589) (0.0002,0.855813) (0.00021,0.850112) (0.00022,0.844482) (0.00023,0.83892) (0.00024,0.833422) (0.00025,0.827987) (0.00026,0.822611) (0.00027,0.817293) (0.00028,0.812031) (0.00029,0.806822) (0.0003,0.801665) (0.00031,0.796557) (0.00032,0.791499) (0.00033,0.786487) (0.00034,0.781521) (0.00035,0.776599) (0.00036,0.771721) (0.00037,0.766884) (0.00038,0.762089) (0.00039,0.757333) (0.0004,0.752616) (0.00041,0.747937) (0.00042,0.743295) (0.00043,0.738689) (0.00044,0.734119) (0.00045,0.729583) (0.00046,0.725081) (0.00047,0.720612) (0.00048,0.716176) (0.00049,0.711772) (0.0005,0.707399) (0.00051,0.703057) (0.00052,0.698744) (0.00053,0.694462) (0.00054,0.690208) (0.00055,0.685983) (0.00056,0.681786) (0.00057,0.677616) (0.00058,0.673474) (0.00059,0.669358) (0.0006,0.665268) (0.00061,0.661204) (0.00062,0.657166) (0.00063,0.653153) (0.00064,0.649164) (0.00065,0.6452) (0.00066,0.641259) (0.00067,0.637342) (0.00068,0.633449) (0.00069,0.629578) (0.0007,0.625731) (0.00071,0.621905) (0.00072,0.618102) (0.00073,0.61432) (0.00074,0.61056) (0.00075,0.606821) (0.00076,0.603103) (0.00077,0.599406) (0.00078,0.59573) (0.00079,0.592073) (0.0008,0.588437) (0.00081,0.58482) (0.00082,0.581223) (0.00083,0.577646) (0.00084,0.574087) (0.00085,0.570548) (0.00086,0.567027) (0.00087,0.563524) (0.00088,0.56004) (0.00089,0.556574) (0.0009,0.553126) (0.00091,0.549696) (0.00092,0.546284) (0.00093,0.542888) (0.00094,0.539511) (0.00095,0.53615) (0.00096,0.532806) (0.00097,0.529479) (0.00098,0.526168) (0.00099,0.522874) (0.001,0.519597)								
	};
	\addlegendentry{$Q^{(1)}(\mathcal{B}_{p_X,100 p_X})$}

		
		\addplot[dotted,red,thick,smooth] coordinates {
(2.2204e-16,1)
(1e-05,0.98841)
(2e-05,0.97884)
(3e-05,0.97004)
(4e-05,0.96174)
(5e-05,0.9538)
(6e-05,0.94616)
(7e-05,0.93876)
(8e-05,0.93162)
(9e-05,0.92485)
(0.0001,0.91808)
(0.00011,0.91145)
(0.00012,0.90497)
(0.00013,0.89861)
(0.00014,0.89237)
(0.00015,0.88625)
(0.00016,0.88023)
(0.00017,0.8743)
(0.00018,0.86847)
(0.00019,0.86273)
(0.0002,0.85706)
(0.00021,0.85148)
(0.00022,0.84598)
(0.00023,0.84055)
(0.00024,0.83518)
(0.00025,0.82989)
(0.00026,0.82466)
(0.00027,0.81949)
(0.00028,0.81438)
(0.00029,0.80933)
(0.0003,0.80434)
(0.00031,0.7994)
(0.00032,0.79451)
(0.00033,0.78968)
(0.00034,0.7849)
(0.00035,0.78017)
(0.00036,0.77548)
(0.00037,0.77084)
(0.00038,0.76625)
(0.00039,0.7617)
(0.0004,0.7572)
(0.00041,0.75274)
(0.00042,0.74832)
(0.00043,0.74394)
(0.00044,0.7396)
(0.00045,0.7353)
(0.00046,0.73104)
(0.00047,0.72682)
(0.00048,0.72264)
(0.00049,0.71849)
(0.0005,0.71438)
(0.00051,0.7103)
(0.00052,0.70626)
(0.00053,0.70225)
(0.00054,0.69828)
(0.00055,0.69434)
(0.00056,0.69043)
(0.00057,0.68656)
(0.00058,0.68271)
(0.00059,0.6789)
(0.0006,0.67512)
(0.00061,0.67137)
(0.00062,0.66765)
(0.00063,0.66397)
(0.00064,0.66031)
(0.00065,0.65668)
(0.00066,0.65307)
(0.00067,0.6495)
(0.00068,0.64596)
(0.00069,0.64244)
(0.0007,0.63895)
(0.00071,0.63549)
(0.00072,0.63205)
(0.00073,0.62865)
(0.00074,0.62526)
(0.00075,0.62191)
(0.00076,0.61858)
(0.00077,0.61528)
(0.00078,0.612)
(0.00079,0.60875)
(0.0008,0.60552)
(0.00081,0.60232)
(0.00082,0.59914)
(0.00083,0.59599)
(0.00084,0.59286)
(0.00085,0.58976)
(0.00086,0.58667)
(0.00087,0.58362)
(0.00088,0.58059)
(0.00089,0.57758)
(0.0009,0.57459)
(0.00091,0.57163)
(0.00092,0.56869)
(0.00093,0.56577)
(0.00094,0.56287)
(0.00095,0.56)
(0.00096,0.55715)
(0.00097,0.55432)
(0.00098,0.55152)
(0.00099,0.54874)
(0.001,0.54597)
	};
		\addlegendentry{UB Theorem~\ref{thm:epsilonDegradableAdditivity}}			
		

	\end{axis}  

\end{tikzpicture}}\\
    \caption{This plot compares upper and lower bounds on $Q(\mathcal{B}_{p_X,p_Z})$ derived in~\eqref{eq_newUB_BB84} and in Theorem~\ref{thm:epsilonDegradableAdditivity}, for $\varepsilon_{p_X,p_Z}$ as in \eqref{opt:epsi}, with the best previously known upper bounds. We consider two different setups of $p_X/p_Z$.}
    \label{fig:bb84}
\end{figure}
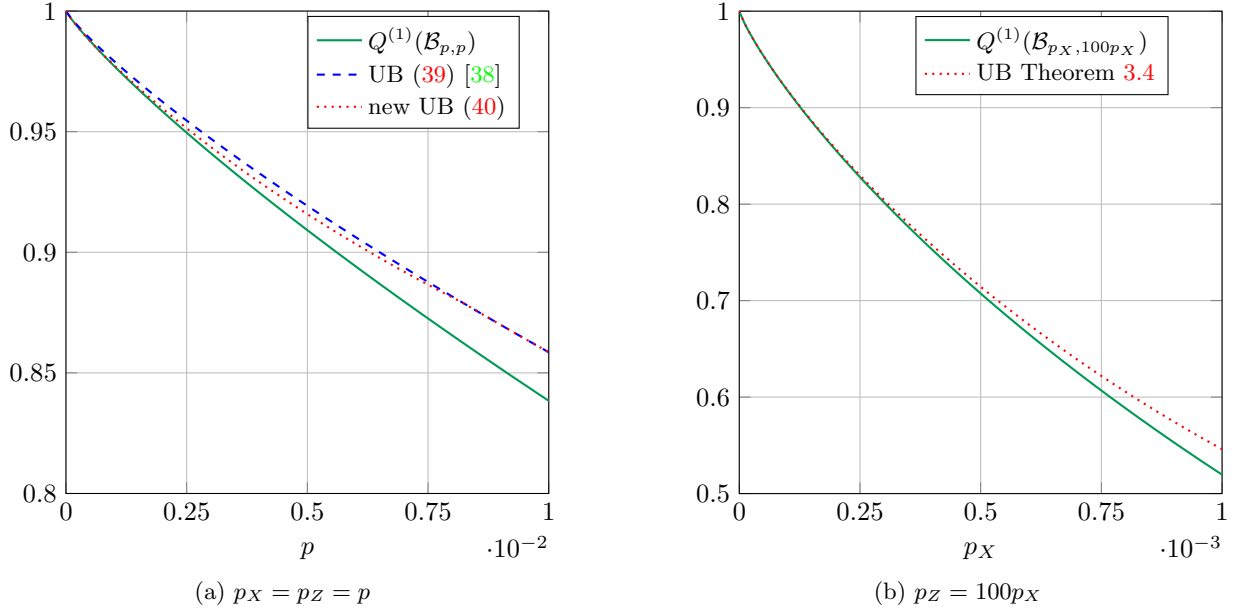

\subsection{Randomizing channels} \label{sec:randChannels}
A quantum channel $\Phi:\mathcal{S}(A) \to \mathcal{S}(B)$ is called \emph{$\varepsilon$-randomizing} if for any state $\rho \in \mathcal{D}(A)$, ${\|\Phi(\rho)-\tfrac{1}{|B|}\mathds{1}_{B}\|_{\opnorm}} \leq \tfrac{\varepsilon}{|B|}$, where $\tfrac{1}{|B|}\mathds{1}_{B}$ is the maximally mixed state on the system $B$.
Consider a channel $\Phi:\mathcal{S}(A) \to \mathcal{S}(B)$ whose complementary channel is $\Phi^{\setC}:\mathcal{S}(A) \ni  \rho \mapsto \tfrac{1}{|B|} \sum_{i=1}^{|B|} U_i \, \rho \, U_i^\dagger \in \mathcal{S}(E)$, where $\{U_i\}_{i=1}^{|B|}$ are independent random matrices Haar-distributed on the unitary group $\mathcal{U}(|A|)$ with $|A| = |E|$. Consider the fully mixing channel $\Xi:\mathcal{S}(B) \ni \rho \mapsto \Tr{\rho} \frac{1}{|E|} \mathds{1}_{E} \in \mathcal{S}(E)$.
\begin{myprop}[\cite{aubrun09}] \label{prop:aubrun}
If $|B| \geq C\tfrac{|E|^3}{\varepsilon^2}$ for some constant $C>0$ and $0 < \varepsilon <1$, then with high probability the channel $\Phi$ is $\varepsilon$-degradable.
\end{myprop}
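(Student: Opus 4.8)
The plan is to take $\Xi$ to be the completely mixing channel as in the statement and to show directly that $\norm{\Phi^{\setC}-\Xi\circ\Phi}_{\diamond}\le\varepsilon$ holds with high probability over the Haar‑random unitaries $U_1,\dots,U_{|B|}$. Since $\Phi$ is trace preserving, $\Xi\circ\Phi$ is simply the completely depolarizing channel $\mathcal R:\rho\mapsto\Tr{\rho}\,\tfrac{1}{|E|}\mathds{1}_E$. Writing $\mathcal U_V(\cdot):=V(\cdot)V^{\dagger}$, one has $\Phi^{\setC}=\tfrac1N\sum_{i=1}^{N}\mathcal U_{U_i}$ with $N:=|B|$, while $\mathcal R=\mathds{E}_V[\mathcal U_V]$ over Haar $V$ (twirling over a single tensor factor yields a maximally mixed state there, using $|A|=|E|$). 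Hence $\Lambda:=\Phi^{\setC}-\mathcal R=\tfrac1N\sum_i(\mathcal U_{U_i}-\mathcal R)$ is a normalized sum of $N$ i.i.d.\ mean‑zero Hermiticity‑preserving maps, and the whole problem reduces to a second‑moment estimate for $\norm{\Lambda}_{\diamond}$.

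The first step is a norm inequality: for any Hermiticity‑preserving $\Lambda$ with $|A|=|E|$ I claim $\norm{\Lambda}_{\diamond}\le\sqrt{|E|}\,\norm{J(\Lambda)}_{F}$. To see this, evaluate the diamond norm on a pure input $\ket{\psi}\in A\otimes A'$, $A'\simeq A$, and Schmidt‑decompose $\ket{\psi}=(\mathds{1}_A\otimes C)\ket{\Gamma}$ with $\ket{\Gamma}=\sum_i\ket{i}\ket{i}$ and $\norm{C}_F=1$; then $(\Lambda\otimes\id)(\proj{\psi})=(\mathds{1}_E\otimes C)\,J(\Lambda)\,(\mathds{1}_E\otimes C^{\dagger})$, so by H\"older's inequality $\norm{XYZ}_{\trnorm}\le\norm{X}_{\opnorm}\norm{Y}_{F}\norm{Z}_{F}$ together with $\norm{\mathds{1}_E\otimes C}_{\opnorm}\le\norm{C}_F=1$ and $\norm{\mathds{1}_E\otimes C^{\dagger}}_{F}=\sqrt{|E|}$, and taking the supremum over pure inputs gives the claim.

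The second step is to compute $\mathds{E}\bigl[\norm{J(\Lambda)}_{F}^{2}\bigr]$. Write $J(\Lambda)=\tfrac1N\sum_i Z_i$ with $Z_i:=J(\mathcal U_{U_i})-J(\mathcal R)$, i.i.d.\ with $\mathds{E}Z_i=0$. Expanding the square, the cross terms satisfy $\mathds{E}[\langle Z_i,Z_j\rangle_F]=\langle\mathds{E}Z_i,\mathds{E}Z_j\rangle_F=0$; and because $J(\mathcal U_{U_i})$ is $|A|$ times a rank‑one projector, its trace $|A|$, its norm $\norm{J(\mathcal U_{U_i})}_F^2=|E|^2$, and its overlap $\langle J(\mathcal U_{U_i}),J(\mathcal R)\rangle_F=1$ with $J(\mathcal R)=\tfrac1{|E|}\mathds{1}$ do not depend on $U_i$, so each diagonal quantity $\norm{Z_i}_F^2=|E|^2-2+1=|E|^2-1$ is \emph{deterministic}. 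Thus $\norm{J(\Lambda)}_F^2=\tfrac{|E|^2-1}{N}+\tfrac1{N^2}\sum_{i\ne j}\langle Z_i,Z_j\rangle_F$, whose random remainder has mean $0$ and variance $O(1/N^2)$ — the relevant input being $\mathds{E}\bigl|\Trs(U_1^{\dagger}U_2)\bigr|^{4}=2$ for Haar $U_1,U_2$ (Diaconis--Shahshahani), which makes $\mathds{E}[\langle Z_1,Z_2\rangle_F^2]=O(1)$. By Chebyshev's inequality, $\norm{J(\Lambda)}_F^2\le 2|E|^2/N$ with probability $1-O(|E|^{-2})$, and combining with the first step, $\norm{\Lambda}_{\diamond}^2\le |E|\cdot 2|E|^2/N=2|E|^3/|B|\le\varepsilon^2$ as soon as $|B|\ge 2|E|^3/\varepsilon^2$; so the proposition holds with $C=2$, the failure probability tending to $0$.

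The point that makes the exponent $3$ rather than $4$ is the avoidance of a gratuitous dimension factor: the naive estimate $\norm{\Lambda}_{\diamond}\le|E|^{2}\,\norm{J(\Lambda)}_{\opnorm}$ loses a full factor $|E|^2$ and only yields the much weaker threshold $|B|\gtrsim|E|^4/\varepsilon^2$. One must instead work in the Hilbert--Schmidt norm, which wins $\sqrt{|E|}$ on the diamond‑norm side \emph{and} another $\sqrt{|E|}$ because $\mathds{E}\,\norm{J(\Lambda)}_F^2\sim|E|^2/|B|$ rather than $|E|^4/|B|$; the two savings combine into the sharp $|E|^3$. A minor loose end is the concentration of $\norm{J(\Lambda)}_F^2$: Chebyshev suffices here since its diagonal part is exactly deterministic, but if a stronger non‑asymptotic tail is wanted one could instead invoke L\'evy‑type concentration of measure for the $1$‑Lipschitz function $(U_1,\dots,U_N)\mapsto\norm{J(\Lambda)}_{F}$ on $\mathcal U(|A|)^{\times N}$.
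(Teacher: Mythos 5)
Your proof is correct, but it takes a genuinely different route from the paper's. The paper disposes of this proposition in two lines: it invokes Aubrun's Theorem~1 --- the statement that $|B|\gtrsim |E|/\delta^2$ Haar unitaries already make the random unitary channel $\delta$-randomizing, i.e.\ $\sup_\rho\|\Phi^{\setC}(\rho)-\tfrac{1}{|E|}\mathds{1}\|_{\opnorm}\leq\delta/|E|$ uniformly over all inputs --- and then converts to the diamond norm via the crude chain $\norm{\Theta_1-\Theta_2}_{\diamond}\leq |E|^2\max_\rho\norm{\Theta_1(\rho)-\Theta_2(\rho)}_{\opnorm}$; setting $\delta=\varepsilon/|E|$ yields the $|E|^3/\varepsilon^2$ threshold. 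You instead give a self-contained second-moment argument: the inequality $\norm{\Lambda}_{\diamond}\leq\sqrt{|E|}\,\norm{J(\Lambda)}_F$ (your derivation via the Schmidt operator $C$ and H\"older is correct), the exact identity $\mathds{E}\,\norm{J(\Lambda)}_F^2=(|E|^2-1)/|B|$, and a Chebyshev bound on the off-diagonal fluctuation using $\langle Z_1,Z_2\rangle_F=|\mathrm{tr}(U_1^\dagger U_2)|^2-1$. What the paper's route buys is the far stronger conclusion packaged inside Aubrun's theorem (uniform operator-norm randomization with very small failure probability), at the price of importing a nontrivial concentration-of-measure result; what your route buys is that the proposition as actually stated --- a diamond-norm bound only --- follows from elementary moment computations with the same exponent $3$, and it makes transparent where that exponent comes from. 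Two minor caveats, neither affecting validity: your ``with high probability'' is only polynomially small in $|E|$ (failure probability $O(|E|^{-4})$ from Chebyshev, so for fixed small $|E|$ your guarantee is weaker than Aubrun's, as you yourself flag), and your closing remark about the ``naive estimate'' is slightly misdirected, since the paper's actual conversion goes through the operator norm of output states rather than of the Choi matrix and, combined with Aubrun's uniform bound, loses nothing.
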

\begin{proof}
Consider the two channels $\Phi$ and $\Xi$ as defined above. We show that if $|B| \geq C\tfrac{|E|^3}{\varepsilon^2}$, with high probability $\norm{\Phi^\setC - \Xi \circ \Phi}_{\diamond}\leq \varepsilon$. This follows directly from \cite[Thm.~1]{aubrun09} together with the fact that for two arbitrary cptp maps $\Theta_1, \Theta_2: \mathcal{S}(A) \to \mathcal{S}(B)$, 
\begin{align*}
\norm{\Theta_1 - \Theta_2}_{\diamond} 
&\leq |B| \max_{\rho \in \mathcal{D}(A)} \norm{\Theta_1(\rho)- \Theta_2(\rho)}_{\trnorm}\\
&\leq |B|^2  \max_{\rho \in \mathcal{D}(A)} \norm{\Theta_1(\rho)- \Theta_2(\rho)}_{\opnorm} \ .
\end{align*}
\end{proof}
We thus can use Theorem~\ref{thm:epsilonDegradableAdditivity} to estimate $Q(\Phi)$ and $P(\Phi)$ from above for most of the random unitary channels $\Phi$ as defined above that have an environment that is considerably smaller than the output system. As shown in \cite{aubrun09}, it can be verified that the constant $C$ in Proposition~\ref{prop:aubrun} can be chosen, e.g., as $C=150$.

\section{Discussion} \label{sec:conclusion}
We have seen that the concept of degradable channels can be generalized to the more robust notion of approximate degradable channels such that the beneficial additivity properties degradable channels offer are approximately preserved. As it can be efficiently determined how close in the diamond norm an arbitrary channel satisfies the degradability condition (by solving an SDP), the framework of approximate degradable channels can be used to derive upper bounds to the quantum and private classical capacity that can be evaluated efficiently.
Unlike previous attempts to derive upper bounds, our method does not rely on channel specific arguments and therefore can be applied to all channels.

For future work it would be of interest to better understand the differences between $\varepsilon$-degradable channels and $\varepsilon$-close-degradable channels (as introduced in Appendix~\ref{app:alternativeDef}). A problem that is left open is the question if a converse statement to Proposition~\ref{prop:relation} is possible, i.e., if an $\varepsilon$-degradable channel must be also $\theta(\varepsilon)$-close-degradable for some function $\theta: \Rp \ni \varepsilon \mapsto \theta(\varepsilon) \in \Rp$. Another question that deserves further investigation is if the optimization problem \eqref{opt:epsi2} can be solved (or at least approximated) efficiently. 

The concept of approximate degradable channels could also be useful in classical information theory. We note that to some extent, the current understanding about the capacity region of a classical broadcast channel is comparable to the knowledge about the quantum and private classical capacity --- no single-letter formula is known, except in the case of a degradable broadcast channel \cite{elgamal12}. As such it could be promising to apply the framework of approximate degradable quantum channels, introduced in this article, to classical broadcast channels. 

Very recently, the bounds for approximate degradable channels derived in this article have been applied to determine the quantum and the private capacity of low-noise quantum channels to leading orders in the channelÕs distance to the perfect channel~\cite{letditzky17_2}.

\section*{Acknowledgments}
We would like to thank Omar Fawzi, Philipp Kammerlander, Graeme Smith, Marco Tomamichel, and Michael Wolf for helpful discussions and pointers to references. We further thank the associate editor Mark M.~Wilde for constructive feedback.

This project was supported by the Swiss National Science Foundation (through the National Centre of Competence in Research `Quantum Science and Technology'), by the European Research Council (grant No.~258932), and by the Air Force Office of Scientific Research (AFOSR) via grant FA9550-16-1-0245. VBS acknowledges support by an ETH postdoctoral fellowship. AW's work was supported by the EU (STREP ``RAQUEL''), the ERC (AdG ``IRQUAT''), the Spanish MINECO (grant FIS2013-40627-P) with the support of FEDER funds, as well as by the Generalitat de Catalunya CIRIT, project~2014-SGR-966.



\appendix

\section{Approximate degradabiliy versus closeness to degradable channels} \label{app:alternativeDef}

The operational meaning behind Definition~\ref{def:epsiDeg} is that a channel is called $\varepsilon$-degradable if the degradability condition is approximately (up to an $\varepsilon$) satisfied. An alternative approach is to consider the distance to degradable channels.

\begin{mydef} [$\varepsilon$-close-degradable] \label{def:EpsiCloseDeg}
A channel $\Phi:\mathcal{S}(A) \to \mathcal{S}(B)$ is said to be \emph{$\varepsilon$-close-degradable} if there exists a degradable channel $\Psi:\mathcal{S}(A) \to \mathcal{S}(B)$ such that $\norm{\Phi-\Psi}_{\diamond} \leq \varepsilon$.
\end{mydef}

By Definition~\ref{def:EpsiCloseDeg} and since the identity channel is degradable, it follows that every channel $\Phi:\mathcal{S}(A) \to \mathcal{S}(B)$ is $\varepsilon$-close-degradable with respect to some $\varepsilon \in [0,2]$.
The following proposition proves that $\varepsilon$-close-degradable channels, similar as $\varepsilon$-degradable channels, inherit the additivity properties of degradable channels with an error term that vanishes in the limit $\varepsilon \to 0$.
\begin{myprop}[Properties of $\varepsilon$-close-degradable channels] \label{prop:epsiDeg2}
Let $\Phi:\mathcal{S}(A) \to \mathcal{S}(B)$ be a quantum channel that is $\varepsilon$-close-degradable (with respect to a degradable channel $\Psi:\mathcal{S}(A) \to \mathcal{S}(B)$), then
\begin{enumerate}[(i)]
\item $|Q(\Phi) - Q^{(1)}(\Psi)|  \leq  \varepsilon \log |B| + (2+\varepsilon) \Hb\bigl(\frac{\varepsilon}{2+\varepsilon}\bigr)$ , \label{it:si}
\item $|P(\Phi)- Q^{(1)}(\Psi) | \leq 2\varepsilon \log |B| + 2(2+\varepsilon) \Hb\bigl(\frac{\varepsilon}{2+\varepsilon}\bigr)$\, . \label{it:siii}
\end{enumerate}
\end{myprop}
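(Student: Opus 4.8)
The plan is to establish a diamond-norm continuity estimate for the quantum (resp.\ private classical) capacity and combine it with the fact that, since $\Psi$ is degradable, $Q(\Psi)=P(\Psi)=Q^{(1)}(\Psi)$ and moreover $Q^{(1)}(\Psi^{\otimes n})=P^{(1)}(\Psi^{\otimes n})=n\,Q^{(1)}(\Psi)$ for every $n$ (the tensor power of a degradable channel is again degradable, so \cite{shor2short,smith08_3} apply). Because $Q(\Phi)=\lim_n\frac1n Q^{(1)}(\Phi^{\otimes n})$ by \eqref{eq:Q} and $P(\Phi)=\lim_n\frac1n P^{(1)}(\Phi^{\otimes n})$ by \eqref{eq:P}, it is enough to bound $\bigl|\frac1n Q^{(1)}(\Phi^{\otimes n})-Q^{(1)}(\Psi)\bigr|$ and $\bigl|\frac1n P^{(1)}(\Phi^{\otimes n})-Q^{(1)}(\Psi)\bigr|$ by a quantity independent of $n$ and then let $n\to\infty$.

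For this I would run a hybrid argument, interpolating between $\Phi^{\otimes n}$ and $\Psi^{\otimes n}$ one tensor factor at a time. Set $\Lambda_k:=\Psi^{\otimes k}\otimes\Phi^{\otimes(n-k)}$, so $\Lambda_0=\Phi^{\otimes n}$, $\Lambda_n=\Psi^{\otimes n}$, and $\|\Lambda_{k-1}-\Lambda_k\|_\diamond\le\|\Phi-\Psi\|_\diamond\le\varepsilon$, since the two channels differ only in their $k$-th factor. By the triangle inequality $|Q^{(1)}(\Phi^{\otimes n})-nQ^{(1)}(\Psi)|\le\sum_{k=1}^n|Q^{(1)}(\Lambda_{k-1})-Q^{(1)}(\Lambda_k)|$, and the crux is to bound each summand by a constant. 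Choosing a purified input optimal for one of the two channels gives $|Q^{(1)}(\Lambda_{k-1})-Q^{(1)}(\Lambda_k)|\le\max_\rho|I(R\rangle C^n)_{\Lambda_{k-1}\rho}-I(R\rangle C^n)_{\Lambda_k\rho}|$, where $C^n$ are the $n$ channel outputs. The states $\Lambda_{k-1}\rho$ and $\Lambda_k\rho$ agree on $RC^{-k}$ — tracing out the $k$-th output of a trace-preserving map equals tracing out its input, which the remaining factors leave untouched — and they are $\varepsilon$-close on the single system $C_k$ of dimension $|B|$. Writing $I(R\rangle C^n)=I(R\rangle C^{-k})+I(R:C_k|C^{-k})$ and expanding the conditional mutual information into conditional entropies of $C_k$, Lemma~\ref{lem:alickiFannes} bounds each step by $\varepsilon\log|B|+(2+\varepsilon)\Hb\bigl(\tfrac{\varepsilon}{2+\varepsilon}\bigr)$. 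Dividing the telescoped sum by $n$ and taking $n\to\infty$ yields claim~\eqref{it:si}.

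For claim~\eqref{it:siii} I would use the same hybrid scheme on $P^{(1)}$, but now each step requires comparing $I(X:C^n)-I(X:\mathcal E^n)$, so the environment systems $\mathcal E^n$ must be controlled as well. Here I would import the machinery from the proof of Theorem~\ref{thm:epsilonDegradableAdditivity}\eqref{it:iii}: pass to a classical extension $Y$ coming from a spectral decomposition of the ensemble states into pure states, rewrite the private information via the chain rule so that only single-letter conditional (mutual) informations appear, and use that the factor being swapped in, $\Psi$, is \emph{exactly} degradable (so that Lemma~\ref{lem:help} applies with $\varepsilon=0$ to that factor) to keep the environment contribution of each step bounded by a constant multiple of $\varepsilon\log|B|$. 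Summing, dividing by $n$, and letting $n\to\infty$ gives~\eqref{it:siii}. The lower bounds $Q(\Phi),P(\Phi)\ge Q^{(1)}(\Psi)-(\cdots)$ follow for free from the same telescoping, or more directly from $Q(\Phi)\ge Q^{(1)}(\Phi)$, $P(\Phi)\ge Q(\Phi)$ and the single-use ($n=1$) instance of the comparison above together with $P^{(1)}(\Psi)=Q^{(1)}(\Psi)$.

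The hard part is the dimension bookkeeping that makes every telescoping step cost only $O(\varepsilon\log|B|)$ — a constant — rather than something growing with $n$. A naive estimate via $\|\Phi^{\otimes n}-\Psi^{\otimes n}\|_\diamond\le n\varepsilon$ fed into Lemma~\ref{lem:alickiFannes} would carry the reference dimension $|A|^n$ and hence yield a bound of order $n^2\varepsilon$, which is useless after dividing by $n$. Avoiding this forces one to express each entropic quantity as a difference of conditional entropies in which a single channel use's output (of dimension $|B|$) is the only un-conditioned system, and to check at each step that the marginals on the conditioning systems are invariant under replacing $\Phi$ by $\Psi$ in one factor; for~\eqref{it:siii} the extra subtlety is carrying this out for the environment terms, which is exactly where the exact degradability of the interpolating $\Psi$-factor is needed. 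This is, in essence, a streamlined version of the Leung--Smith continuity-of-capacity argument specialised to the situation where one of the two channels is degradable.
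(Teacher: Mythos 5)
Your overall strategy is the same as the paper's: combine $Q(\Psi)=P(\Psi)=Q^{(1)}(\Psi)$ for the degradable channel $\Psi$ with a diamond-norm continuity bound for $Q$ and $P$. The paper disposes of the continuity part by citing \cite[Cor.~2 and Cor.~3]{leung09} upgraded with Lemma~\ref{lem:alickiFannes}, whereas you re-derive it; for claim~\eqref{it:si} your re-derivation is exactly the Leung--Smith hybrid argument and is sound: the interpolating output states agree on $RC^{-k}$, so the per-step difference reduces to a difference of $I(R:C_k|C^{-k})$, controlled by Lemma~\ref{lem:alickiFannes} applied on the system $C_k$ of dimension $|B|$. (One bookkeeping caveat: expanding $I(R:C_k|C^{-k})$ into two conditional entropies of $C_k$ and applying Lemma~\ref{lem:alickiFannes} to each gives $2\varepsilon\log|B|+(2+\varepsilon)\Hb\bigl(\frac{\varepsilon}{2+\varepsilon}\bigr)$ per step, not the $\varepsilon\log|B|+(2+\varepsilon)\Hb\bigl(\frac{\varepsilon}{2+\varepsilon}\bigr)$ you claim; since the proposition states the smaller coefficient this discrepancy is shared with the target, but your stated expansion does not deliver it.)

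For claim~\eqref{it:siii} there is a genuine gap. In the hybrid step from $\Psi^{\otimes(k-1)}\otimes\Phi^{\otimes(n-k+1)}$ to $\Psi^{\otimes k}\otimes\Phi^{\otimes(n-k)}$, the term $I(X:\mathcal{E}^n)$ involves the environment of the $k$-th factor, which is produced by $\Phi^{\setC}$ in one hybrid and by $\Psi^{\setC}$ in the other; by continuity of the Stinespring dilation these complementary channels are only $2\sqrt{\varepsilon}$-close in diamond norm (cf.\ the proof of Proposition~\ref{prop:relation}), so a direct comparison of the environment terms yields an $O(\sqrt{\varepsilon})$ bound rather than the claimed $O(\varepsilon)$. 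Your proposed fix --- invoking Lemma~\ref{lem:help} with $\varepsilon=0$ for the exactly degradable $\Psi$-factor --- does not repair this: that lemma compares a channel's environment with a degraded version of \emph{its own} output, and says nothing about how $\Psi$'s environment relates to $\Phi$'s. The step that actually works (and is what the paper's citation of \cite[Cor.~3]{leung09} encapsulates) is to use the purity of the conditional states $\varphi_{x,y}$ to get $H(\mathcal{E}^n|XY)=H(C^n|XY)$, and for each fixed input to write $H(\mathcal{E}^n)=H(RC^n)$ with $R$ a purifying reference; this eliminates the environment entirely and expresses $I(X:C^n)-I(X:\mathcal{E}^n)$ as a difference of two coherent-information-type quantities, each of which is continuous by the argument of part~\eqref{it:si} --- whence the factor $2$ in the bound of claim~\eqref{it:siii}.
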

\begin{proof}
We first prove statement \eqref{it:si} of the proposition. Since the coherent information is additive for degradable channels,
\begin{align*}
|Q(\Phi)-Q^{(1)}(\Psi)| 
&= \left| Q(\Phi)- Q(\Psi) \right| \\
&\leq \varepsilon \log |B|  + (2+\varepsilon) \Hb\Bigl(\frac{\varepsilon}{2+\varepsilon}\Bigr) \, , 
\end{align*}
where the inequality is due to \cite[Cor.~2]{leung09} together with Lemma~\ref{lem:alickiFannes}. 



Statement \eqref{it:siii} of the proposition can be proven as follows. As $\Psi$ is degradable we have $P(\Psi)=Q^{(1)}(\Psi)$ which gives 
\begin{align*}
|P(\Phi)-Q^{(1)}(\Psi)| 
&= |P(\Phi)-P(\Psi)|  \\
&\leq 2\varepsilon \log |B| + 2(2+\varepsilon) \Hb\Bigl(\frac{\varepsilon}{2+\varepsilon}\Bigr)\, ,
\end{align*}
where the inequality follows from the proof of \cite[Cor.~3]{leung09} with Lemma~\ref{lem:alickiFannes}.
%
%
%
%
\end{proof}

As mentioned in Section~\ref{sec:preliminaries}, the function $\rho \mapsto \Icc{\rho}{\Psi}$ is concave if $\Psi$ is degradable which can be helpful when computing the channel coherent information given in \eqref{eq:channelCoherentInfo} (see Proposition~\ref{prop:epsiDeg2}).
We emphasize that the proofs of Theorem~\ref{thm:epsilonDegradableAdditivity} and Proposition~\ref{prop:epsiDeg2} are different although they both prove a similar statement, however under different assumptions. The proof of Theorem~\ref{thm:epsilonDegradableAdditivity} generalizes Devetak and Shor's proof for additivity of degradable channels \cite{shor2short}, whereas the proof of Proposition~\ref{prop:epsiDeg2} is based on continuity properties of channel capacities, following~\cite{leung09} and using the improved Alicki-Fannes inequality (Lemma~\ref{lem:alickiFannes}).

%

Definition~\ref{def:EpsiCloseDeg} directly implies that if a channel $\Phi:\mathcal{S}(A) \to \mathcal{S}(B)$ is $\varepsilon$-close-degradable it is also  $\varepsilon'$-close-degradable for all $\varepsilon' \geq \varepsilon$. The smallest possible value $\varepsilon$ such that $\Phi$ is $\varepsilon$-close-degradable is given by
\begin{align} \label{opt:epsi2}
\hat \varepsilon_{\Phi}:=\left \lbrace
\begin{array}{lll}
			&\inf\limits_{\Psi,\Theta} 		& \norm{\Phi-\Psi}_{\diamond} \\
			&\st					& \Psi^{\setC} = \Theta \circ \Psi \\
			& & \Psi:\mathcal{S}(A) \to \mathcal{S}(B) \textnormal{ is cptp}\\
			& & \Theta:\mathcal{S}(B) \to \mathcal{S}(E) \textnormal{ is cptp}\, .
	\end{array} \right.
\end{align}
Note that unlike the optimization problem \eqref{opt:epsi} which can be phrased as an SDP and as a consequence can be solved efficiently, it is unclear if $\hat \varepsilon_{\Phi}$ can be computed efficiently. The optimization problem \eqref{opt:epsi2} is clearly not an SDP as the constraint $\Psi^\setC = \Theta\circ \Psi$ is not linear in $(\Psi,\Theta)$. 
%

Similarly to an $\varepsilon$-close-degradable channel being defined via being close to a degradable channel we can define an $\varepsilon$-close-anti degradable channel as being close to an anti-degradable channel.
\begin{mydef} [$\varepsilon$-close-anti-degradable] \label{def:EpsiCloseAntiDeg}
A channel $\Phi:\mathcal{S}(A) \to \mathcal{S}(B)$ is said to be \emph{$\varepsilon$-close-anti-degradable} if there exists an anti-degradable channel $\Xi:\mathcal{S}(A) \to \mathcal{S}(B)$ such that $\norm{\Phi-\Xi}_{\diamond} \leq \varepsilon$.
\end{mydef}

\begin{mycor}[Properties of $\varepsilon$-close-anti-degradable channels] \label{cor:closeAntiDeg}
Let $\Phi:\mathcal{S}(A) \to \mathcal{S}(B)$ be a quantum channel that is $\varepsilon$-close-anti-degradable, then $Q(\Phi) \leq P(\Phi) \leq 2\varepsilon \log |B| + 2(2+\varepsilon) \Hb\bigl(\frac{\varepsilon}{2+\varepsilon}\bigr)$.
\end{mycor}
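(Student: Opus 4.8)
\textbf{Proof proposal for Corollary~\ref{cor:closeAntiDeg}.}
The plan is to mimic exactly the argument used for statement~\eqref{it:siii} of Proposition~\ref{prop:epsiDeg2}, replacing the nearby degradable channel by a nearby anti-degradable one. First, the inequality $Q(\Phi)\leq P(\Phi)$ is immediate, since fully quantum communication is necessarily private~\cite[Thm.~12.6.3]{wilde_book}; so only the upper bound on $P(\Phi)$ needs work. By Definition~\ref{def:EpsiCloseAntiDeg} there is an anti-degradable channel $\Xi:\mathcal{S}(A)\to\mathcal{S}(B)$ with $\norm{\Phi-\Xi}_{\diamond}\leq\varepsilon$, and by the result of \cite{smith08_3} every anti-degradable channel has vanishing private classical capacity, i.e.\ $P(\Xi)=0$.

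The core step is then a continuity estimate for the private classical capacity with respect to the diamond norm: $|P(\Phi)-P(\Xi)|\leq 2\varepsilon\log|B|+2(2+\varepsilon)\Hb\bigl(\tfrac{\varepsilon}{2+\varepsilon}\bigr)$. This is precisely the bound appearing in the proof of statement~\eqref{it:siii} of Proposition~\ref{prop:epsiDeg2}; it follows from the argument of \cite[Cor.~3]{leung09} — bounding the difference of the single-letter private informations $P^{(1)}(\Phi^{\otimes n})$ and $P^{(1)}(\Xi^{\otimes n})$ on a common near-optimal input ensemble, telescoping over the $n$ channel uses, and using that $\norm{\Phi^{\otimes n}(\cdot)-\Xi^{\otimes n}(\cdot)}_{\trnorm}$ is controlled by $n\varepsilon$ — combined with the strengthened Alicki–Fannes inequality of Lemma~\ref{lem:alickiFannes} applied to the relevant conditional entropy differences (the factor $2$ and the $(2+\varepsilon)$ prefactor coming from the two conditional-entropy terms in $P^{(1)}$, exactly as in the degradable case). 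Since $P(\Xi)=0$, this yields $P(\Phi)=|P(\Phi)-P(\Xi)|\leq 2\varepsilon\log|B|+2(2+\varepsilon)\Hb\bigl(\tfrac{\varepsilon}{2+\varepsilon}\bigr)$, which together with $Q(\Phi)\leq P(\Phi)$ proves the claim.

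I do not expect any genuine obstacle here: the corollary is essentially a direct transcription of Proposition~\ref{prop:epsiDeg2}\eqref{it:siii} with "$P(\Psi)=Q^{(1)}(\Psi)$" replaced by "$P(\Xi)=0$". The only point requiring a little care is making sure the continuity bound from~\cite{leung09} is invoked in the form that already incorporates the improved Alicki–Fannes constant (Lemma~\ref{lem:alickiFannes}) rather than the original one, so that the stated constants $2\varepsilon\log|B|$ and $2(2+\varepsilon)\Hb\bigl(\tfrac{\varepsilon}{2+\varepsilon}\bigr)$ are obtained; this is exactly how the bound is already used in the proof of Proposition~\ref{prop:epsiDeg2}.
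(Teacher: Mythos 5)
Your proposal is correct and follows essentially the same route as the paper's own proof: the paper likewise obtains Corollary~\ref{cor:closeAntiDeg} by combining the continuity of the private classical capacity from \cite[Cor.~3]{leung09} (upgraded with the strengthened Alicki--Fannes inequality of Lemma~\ref{lem:alickiFannes}) with the fact that anti-degradable channels have vanishing private capacity \cite{smith08_3}. Your write-up simply spells out in more detail the telescoping argument that the paper leaves implicit by citing \cite{leung09}.
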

\begin{proof}
This corollary follows immediately from \cite[Cor.~3]{leung09} together with Lemma~\ref{lem:alickiFannes} and the fact that anti-degradable are known to have a private capacity that is zero \cite{smith08_3}. 
\end{proof}
Similar as above, given a channel $\Phi:\mathcal{S}(A) \to \mathcal{S}(B)$ the smallest parameter $\varepsilon$ such that $\Phi$ is $\varepsilon$-close-anti-degradable is given by
\begin{align*}
\tilde\varepsilon_{\Phi}:=\left \lbrace
\begin{array}{lll}
			&\inf\limits_{\Xi} 		& \norm{\Phi-\Xi}_{\diamond} \\
			&\st					& \Xi = \Theta \circ \Xi^{\setC} \\
			& & \Xi:\mathcal{S}(A) \to \mathcal{S}(B) \textnormal{ is cptp}\\
			& & \Theta:\mathcal{S}(E) \to \mathcal{S}(B) \textnormal{ is cptp}\, .
	\end{array} \right.
\end{align*}
The close connection between anti-degradability and $2$-extendibility may be helpful to (efficiently) compute the quanitity $\tilde\varepsilon_{\Phi}$ (see~\cite[Lemma B.1]{letditzky17}).\footnote{We would like to thank one referee for pointing this out.}
Corollary~\ref{cor:closeAntiDeg} implies that for an arbitrary channel $\Phi:\mathcal{S}(A) \to \mathcal{S}(B)$ with $|B|:=\dim B$, we have $Q(\Phi) \leq 2 \tilde \varepsilon_{\Phi} \log |B| + 2(2+\tilde\varepsilon_{\Phi}) \Hb\bigl(\frac{\tilde\varepsilon_{\Phi}}{2+\tilde\varepsilon_{\Phi}}\bigr)$.
There is a close connection between the two concepts of an $\varepsilon$-degradable and an $\varepsilon$-close-degradable channel as stated in the following proposition.
\begin{myprop}[Relation between $\varepsilon$-close-degradable and $\varepsilon$-degradable] \label{prop:relation}
Let $\Phi:\mathcal{S}(A) \to \mathcal{S}(B)$ be a quantum channel that is $\varepsilon$-close-degradable, then $\Phi$ is $(\varepsilon+2\sqrt{\varepsilon})$-degradable.
\end{myprop}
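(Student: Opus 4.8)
The plan is to exploit a degrading map of the nearby degradable channel as an approximate degrading map of $\Phi$, the only loss coming from the fact that the complementary channels of $\Phi$ and of the nearby channel need not be as close as the channels themselves. By Definition~\ref{def:EpsiCloseDeg} I would fix a degradable channel $\Psi:\mathcal{S}(A)\to\mathcal{S}(B)$ with $\norm{\Phi-\Psi}_{\diamond}\leq\varepsilon$ and a degrading map $\Xi:\mathcal{S}(B)\to\mathcal{S}(E)$ with $\Psi^{\setC}=\Xi\circ\Psi$, and then show that $\Xi$ is an $(\varepsilon+2\sqrt{\varepsilon})$-degrading map for $\Phi$ in the sense of Definition~\ref{def:epsiDeg}.

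The first — and main — step is to bound $\norm{\Phi^{\setC}-\Psi^{\setC}}_{\diamond}$ in terms of $\norm{\Phi-\Psi}_{\diamond}$. Here I would invoke the continuity of Stinespring's dilation theorem (Kretschmann--Schlingemann--Werner): one can choose Stinespring isometries $V:A\hookrightarrow B\otimes E$ of $\Phi$ and $W:A\hookrightarrow B\otimes E$ of $\Psi$ into a common environment such that $\norm{V-W}_{\opnorm}\leq\sqrt{\norm{\Phi-\Psi}_{\diamond}}\leq\sqrt{\varepsilon}$. Defining $\Phi^{\setC}$ and $\Psi^{\setC}$ via these particular dilations, and using that $V\otimes\1$ and $W\otimes\1$ are isometries, that states have unit trace norm, and that the partial trace is a contraction in trace norm, one obtains $\norm{(V\otimes\1)\rho(V\otimes\1)^{\dagger}-(W\otimes\1)\rho(W\otimes\1)^{\dagger}}_{\trnorm}\leq 2\norm{V-W}_{\opnorm}$ for every state $\rho$, hence $\norm{\Phi^{\setC}-\Psi^{\setC}}_{\diamond}\leq 2\sqrt{\varepsilon}$. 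Note that degradability of $\Psi$ is insensitive to the choice of its complementary channel — any two choices differ by an isometry on $E$ that can be absorbed into the degrading map — so we may assume $\Psi^{\setC}=\Xi\circ\Psi$ precisely for the complementary channel defined via $W$.

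It then remains to assemble the pieces with the triangle inequality:
\begin{align*}
\norm{\Phi^{\setC}-\Xi\circ\Phi}_{\diamond}
 &\leq \norm{\Phi^{\setC}-\Psi^{\setC}}_{\diamond}+\norm{\Psi^{\setC}-\Xi\circ\Psi}_{\diamond}+\norm{\Xi\circ\Psi-\Xi\circ\Phi}_{\diamond}\\
 &\leq 2\sqrt{\varepsilon}+0+\norm{\Phi-\Psi}_{\diamond}\leq \varepsilon+2\sqrt{\varepsilon}\, ,
\end{align*}
where the middle term vanishes because $\Psi$ is degradable, and the last step uses $\norm{\Xi\circ(\Psi-\Phi)}_{\diamond}\leq\norm{\Psi-\Phi}_{\diamond}$ since composing with the channel $\Xi$ cannot increase the diamond norm. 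This shows $\Phi$ is $(\varepsilon+2\sqrt{\varepsilon})$-degradable. I expect the only nontrivial ingredient to be the Stinespring continuity bound used in the second step; the $\sqrt{\varepsilon}$-loss it introduces is genuine and is the reason the resulting bound is not linear in $\varepsilon$.
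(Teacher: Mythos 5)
Your proof is correct and takes essentially the same route as the paper's: invoke the Kretschmann--Schlingemann--Werner continuity of the Stinespring dilation to get $\norm{\Phi^{\setC}-\Psi^{\setC}}_{\diamond}\leq 2\sqrt{\varepsilon}$, then conclude by the triangle inequality and the fact that post-composition with the cptp map $\Xi$ does not increase the diamond norm. The paper simply cites the Stinespring continuity bound where you spell it out (and your remark that the ambiguity in the choice of complementary channel can be absorbed into the degrading map is a worthwhile clarification).
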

\begin{proof}
Let $\Phi_1,\Phi_2:\mathcal{S}(A) \to \mathcal{S}(B)$ be two channels such that $\norm{\Phi_1 - \Phi_2}_{\diamond} \leq \varepsilon$. Then, by the continuity of Stinespring's representation \cite[Equation~(2)]{werner08} it follows that there exist two complementary channels $\Phi_1^\setC$ and $\Phi_2^\setC$ such that $\norm{\Phi_1^\setC - \Phi_2^\setC}_{\diamond}\leq 2 \sqrt{\varepsilon}$. By assumption there exist two channels $\Xi:\mathcal{S}(A) \to \mathcal{S}(B)$ and $\Theta:\mathcal{S}(B) \to \mathcal{S}(E)$ such that $\norm{\Phi - \Xi}_{\diamond}\leq \varepsilon$ with $\Xi^\setC = \Theta \circ \Xi$. As explained above, the continuity of Stinespring's representation implies that 
\begin{equation*}
\norm{\Phi^\setC - \Theta \circ \Xi}_{\diamond} = \norm{\Phi^\setC - \Xi^\setC}_{\diamond} \leq 2 \sqrt{\varepsilon}\, .
\end{equation*}
Using the triangle inequality gives
\begin{align*}
\norm{\Phi^\setC - \Theta \circ \Phi}_{\diamond} &\leq \norm{\Phi^\setC - \Theta \circ \Xi}_{\diamond} + \norm{\Theta \circ \Xi - \Theta \circ \Phi}_{\diamond}\\
&\leq 2 \sqrt{\varepsilon} + \norm{ \Xi - \Phi}_{\diamond}\\
&\leq 2 \sqrt{\varepsilon} + \varepsilon\, ,
\end{align*}
which proves the assertion.
\end{proof}
It is unclear whether a converse statement to the one given in Proposition~\ref{prop:relation}, i.e., that a channel being $\varepsilon$-degradable implies that it is $\theta(\varepsilon)$-close-degradable for some function $\theta:[0,1]\to \Rp$, is valid. This seems to be difficult to prove (if possible at all) as the structure of the set of degradable channels is poorly understood. (It is known that the set of degradable channels is not convex \cite{cubitt08}, however it is unknown how large this set is.)


\section{Alternative proof of claim~\eqref{it:i} of Theorem~\ref{thm:epsilonDegradableAdditivity}} \label{app:alternativeProof}
For the sake of completeness we present here an alternative proof of claim~\eqref{it:i}.
The lower bound $Q^{(1)}(\Phi)\leq Q(\Phi)$ is immediate. 
Consider a pure state $\phi_{A A_1' \ldots A'_n}$ and for $i \in [n]$ let $V^i_{A'_i \to B_i E_i}$ denote the isometric extension of the $i$-th channel $\Phi$. For $i\in [n]$ let $\sigma^i := V^i \phi (V^i)^{\dagger}$, $\rho_{A B_1 E_1 \ldots B_n E_n}:=(\bigotimes_{i=1}^n V^i )\phi (\bigotimes_{i=1}^n {V^i}^{\dagger})$ and define 
\begin{equation*}
\xi:= \frac{\varepsilon}{2}  \log(|E|-1) +  \Hb(\varepsilon/2)  +\varepsilon \log |E| + \Bigl(1+\frac{\varepsilon}{2} \Bigr) \Hb\Bigl(\frac{\varepsilon}{2+\varepsilon} \Bigr) \, .
\end{equation*}
Assuming that $\rho_{A B_1 E_1 \ldots B_n E_n}$ is the state that maximizes $Q^{(1)}(\Phi^{\otimes n})$ gives
\begin{align}
Q^{(1)}(\Phi^{\otimes n})
&= I(A \rangle B_1 \ldots B_n)_{\rho}\nonumber \\
 &= \Hh{B_1 \ldots B_n}_{\rho} - \Hh{A B_1 \ldots B_n}_{\rho} \nonumber\\
 &=\Hh{B_1 \ldots B_n}_{\rho} - \Hh{E_1 \ldots E_n}_{\rho} \label{eq:pure}\\
 &=\sum_{i=1}^n \Hh{B_i}_{\rho} - \Hh{E_i}_{\rho}  - \left(I(B_1: B_2 : \ldots : B_n)_{\rho}-I(E_1: E_2 : \ldots : E_n)_{\rho}  \right) \nonumber\\
 & \leq \sum_{i=1}^n \Hh{B_i}_{\rho} - \Hh{E_i}_{\rho} + n \xi \label{ineq:lemma}\\
 &= \sum_{i=1}^n \Hh{B_i}_{\sigma^i} - \Hh{A B_i A'_1 \ldots A'_{i-1} A'_{i+1} \ldots A'_n}_{\sigma^i} + n \xi \label{eq:pure2}\\
 &= \sum_{i=1}^n I(A A'_1 \ldots A'_{i-1} A'_{i+1} \ldots A'_n \rangle B_i)_{\sigma^i} +n \xi \nonumber\\
 &\leq n Q^{(1)}(\Phi) + n \xi \, ,\nonumber
\end{align}
where \eqref{eq:pure} follows since the state $\rho$ is pure on the system $A B_1 E_1 \ldots B_n E_n$. Inequality \eqref{ineq:lemma} follows by $n$ times applying Lemma~\ref{lem:help}. Equation \eqref{eq:pure2} is true since the entropies of $\rho$ and $\{ \sigma^i\}_{i=1}^n$ on the given reduced systems are equal and $\sigma^i$ is pure on $A A'_1\ldots A'_{i-1} B_i E_i A'_{i+1} \ldots A'_n$. The final inequality follows as the states $\{\sigma^i\}_{i=1}^n$ are not necessarily the optimizers for the corresponding coherent informations. This proves statement \eqref{it:i} of Theorem~\ref{thm:epsilonDegradableAdditivity}.

\bibliographystyle{abbrv}
\bibliography{./bibtex/header,./bibtex/bibliofile}

\end{document}